\documentclass[11pt]{article}
\setlength{\topmargin}{-.5in}
\setlength{\textheight}{9in}
\setlength{\oddsidemargin}{.125in}
\setlength{\textwidth}{6.25in}
\usepackage{hyperref}
\usepackage{amssymb,epsfig,graphics, graphicx, url,times,mathrsfs,algorithm,algorithmic,amsmath,array,latexsym,xspace,fancyhdr,wrapfig, xcolor,multirow,amsthm,caption,cite,helvet,sidecap}

\usepackage[normalem]{ulem}

\usepackage[inline]{enumitem}
\usepackage{arydshln}
\usepackage[makeroom]{cancel}
\usepackage{varwidth}


\usepackage{tikz}
\usetikzlibrary{shapes.geometric}
\usetikzlibrary{shapes,snakes,patterns}
\usetikzlibrary{arrows,positioning,automata,calc}
\usetikzlibrary{plotmarks}

\definecolor{green1}{rgb}{0,0.5,0}
\definecolor{magenta}{rgb}{1.0, 0.11, 0.81}
\definecolor{mulberry}{rgb}{0.77, 0.29, 0.55}
\definecolor{xgray}{rgb}{0.9, 0.9, 0.9}
\def \blue{\color{blue}}
\def \red{\color{red}}


\def \bes{\begin{equation*}}
\def \ees{\end{equation*}}
\def \bas{\begin{align*}}
\def \eas{\end{align*}}
\def \be{\begin{equation}}
\def \ee{\end{equation}}
\def \bbm{\begin{bmatrix}}
\def \ebm{\end{bmatrix}}
\def \cf{\mathcal{S}}
\def \ck{\mathcal{R}}
\def \cR{\mathcal{R}_2}
\def \cd{\mathcal{D}}
\def \ce{\mathcal{W}}

\def \mbs{\mathbf{s}}
\def \CO{\texttt{CO}}
\def \RO{\texttt{RO}}
\def \SC{\texttt{SC}}
\def \mrS{\mathrm{S}}
\def \mrW{\mathrm{W}}
\def \mrR {\mathrm{R}}

\newtheorem{theorem}{Theorem}
\newtheorem{corollary}{Corollary}

\newtheorem{example}{Example}
\newtheorem{remark}{Remark}


\begin{document}

\title{Staircase Codes for Secret Sharing  with Optimal Communication and Read  Overheads}
\author{Rawad Bitar and Salim El Rouayheb\\ ECE Department, IIT, Chicago\\ Emails: rbitar@hawk.iit.edu, salim@iit.edu}

\renewcommand{\today}{}
\maketitle
\begin{abstract}
We study the communication efficient secret sharing (CESS) problem. A classical threshold  secret sharing scheme encodes a secret  into $n$ shares given to $n$ parties, such that any set of at least $t$, $t<n$,  parties can reconstruct the secret, and any set of at most $z$, $z<t$, colluding parties cannot obtain any information about the secret. A CESS scheme satisfies the previous properties of threshold secret sharing. Moreover, it allows  to reconstruct the secret from any set of $d,\ d\geq t,$ parties by reading and communicating the minimum  amount of information. In this paper, we introduce two explicit constructions of CESS codes called  {\em Staircase Codes}. The first construction achieves optimal communication and read costs for a fixed $d,\ d\geq t$. The second construction achieves optimal costs universally for all possible values of $d, t\leq d\leq n$. Both constructions are designed over a small finite field $GF(q)$, for any prime power $q> n$. We also describe how Staircase codes can also be used to construct threshold changeable secret sharing with minimum storage cost, i.e., minimum share size.
\end{abstract}


\section{Introduction}
Consider the  threshold  secret sharing (SS) problem \cite{S79,Blakley1979} in which a dealer encodes a secret using random keys into $n$ shares and distributes them to $n$ parties. The threshold SS allows a legitimate user contacting any set of at least $t$, $t<n$,  parties to reconstruct the secret by downloading their shares. In addition, the scheme ensures that any set of at most $z$, $z<t$, colluding parties cannot obtain any information, in an information theoretic sense, about the secret. The following example illustrates the construction of a threshold SS on $n=4$ shares.

\begin{example}[Threshold SS]
\label{ex:SS1}
Let $n=4, t=2$ and $z=1$ and let $s$ be a secret uniformly distributed over $GF(5)$. Then, the following $4$ shares $(s+r, s+2r,s+3r,s+4r)$ form a threshold SS scheme, with $r$ being a  random symbol, called key, chosen uniformly at random from $GF(5)$ and independently of $s$. A user can decode the secret by contacting any $t=2$ parties, downloading their shares and decoding $s$ and $r$. Secrecy is ensured, because the secret is padded with the key in each share.
\end{example}

Threshold secret sharing code constructions have been extensively studied in the literature, e.g., \cite{S79,Blakley1979,McESa81,CJT00,KGH83,LPD04,CYH04}. The  literature on secret sharing predominantly studies non-threshold secret sharing schemes, with  so-called general access structures, e.g., \cite{BL88,ISN89,B89}. We refer the interested reader to the following survey works \cite{P12,be11,CDN15} and references within. In this paper, we focus  on the problem of communication (and read) efficient secret sharing (CESS). A CESS scheme satisfies the  properties of  threshold secret sharing described in the previous paragraph. In addition, it achieves minimum communication and read overheads when the user contacts $d,\ d\geq t,$ parties. The communication overhead ({\CO}) is defined as the extra amount of information (beyond the secret size) downloaded by a user contacting $d$ parties in order to decode the secret. The read overhead {\RO} is defined similarly. Next, we give an example of a CESS code that minimizes {\CO} and {\RO}. The CESS code in this example belongs to the family of Staircase codes which we introduce in Section~\ref{sec:cons1}.

\begin{example}
\label{ex:intro} 
Consider again the SS problem of Example~\ref{ex:SS1} with $n=4$, $t=2$, $z=1$. We assume now that the secret $\mathbf{s}$ is formed  of 2 symbols $s_1,s_2$ uniformly distributed over $GF(5)$ and we use two keys  $r_1,r_2$  drawn independently and uniformly at random from $GF(5)$.  To construct the Staircase code,  the secret symbols and keys are arranged in a matrix $M$ as shown in~\eqref{eq:encex}. The matrix $M$ is multiplied by a $4\times3$ Vandermonde matrix $V$ to obtain the matrix $C=VM$. The $4$ rows of $C$ form the $4$ different shares and give the Staircase\footnote{The nomenclature of Staircase codes  comes from the position of the zero block matrices in the general structure of the matrix $M$ (see the general construction in  Table~\ref{tab:unim}).} code shown in Table~\ref{tab:intro}. \be
\label{eq:encex}
\begin{tikzpicture}[baseline=(current  bounding  box.center)]
\tikzstyle{stealth} = [draw=none,text=black]

\draw[draw=none,fill=xgray] (0,-0.69) -- (0,-0.22) -- (0.63,-0.22) -- (0.63,-0.69) -- cycle;
\node[stealth] (1) at (0,0)
{$\bbm s_1 &  r_1\\
s_2 & r_2\\ 
r_1 & 0\\ \ebm$};

\node[stealth] (3) [right=-0.2cm of 1] {.};
\node[stealth] (2) [left=-0.1cm of 1]  {$\bbm 1&1&1\\
1&2&4\\
1&3&4\\
1&4&1\\ \ebm$};

\draw[thin, gray,decorate,decoration={brace,amplitude=10pt,mirror}]  (-2.6,-0.96) -- (-1.08,-0.96);
\node[stealth,color=gray] at (-1.85,-1.5) {$V$};

\draw[thin, gray,decorate,decoration={brace,amplitude=7pt,mirror}]  (-0.64,-0.725) -- (0.64,-0.725);
\node[stealth,color=gray] at (0,-1.15) {$M$};

\draw[thin] (-0.63,-0.69) -- (0,-0.69) -- (0,-0.23) -- (0.63,-0.23);
\draw[thin] (0,-.69) -- (0.63,-0.69);

\node[stealth] [left=1.8cm of 1] {$C=VM=$};

%
 \end{tikzpicture}
\ee

\begin{table}[h!]
\normalsize
\centering
\begin{tabular}[h!]{c|c|c|c}
Party 1& Party 2 & Party 3 & Party 4\\ \hline
\blue $s_1+s_2+ r_1$ & \blue $s_1+2s_2+4r_1$& \blue $s_1+3s_2+ 4r_1$& \blue $s_1+4s_2+r_1$ \\
$r_1+r_2$ & $r_1+2r_2$ & $ r_1+3r_2$ & $ r_1+4r_2$ \\
\end{tabular}
\vspace{0.1cm}
\caption{An example of a CESS code based on the Staircase code construction over $GF(5)$ for $n=4$ parties, threshold $t=2$, $z=1$ colluding parties and any $d=3$ parties can efficiently reconstruct the secret. A user contacting any $t=2$ parties downloads all their shares, i.e., $4$ symbols in total, in order to decode the secret. The resulting overheads are $\CO=\RO=2$ symbols. However, a user contacting any $d=3$ parties decodes the secret by downloading the first symbol (in blue) of each share, i.e., $3$ symbols in total. Hence, $\CO=\RO=1$ symbol. For instance, a user contacting parties $1,2$ and $3$ downloads $s_1+s_2+ r_1$,  $s_1+2s_2+4r_1$, and $s_1+3s_2+ 4r_1$ and can decode the secret and $r_1$. Notice that a user contacting $d=3$ parties can only decode $r_1$, whereas a user contacting $t=2$ parties has to decode $r_1$ and $r_2$.}
\label{tab:intro}

\end{table}

The CESS scheme enjoys the following properties. First, a user decodes the secret either by contacting any $t=2$ parties and downloading all their shares, i.e., 4 symbols, or by contacting any $d=3$ parties and downloading the first symbol (in blue) of each share, i.e., $3$ symbols in total. The key idea here is that the user is only interested in decoding the secret and not necessarily the keys. When $d=3$, the user decodes the secret and only the key $r_1$, whereas when $d=t=2$, the user has to decode the secret and both of the keys.  This code actually achieves the minimum {\CO} and {\RO} equal to $1$ symbol for $d=3$ (and 2 symbols for $d=t=2$) given later in~\eqref{eq:CO} and~\eqref{eq:RC}. Second, secrecy is achieved because the secret $s_1,s_2$ is padded by random keys $r_1,r_2$ and each $z=1$ party cannot obtain any information about $s_1$ and $s_2$. \end{example}

\noindent{\em Related work:} The CESS problem was introduced by Wang and Wong in \cite{WW08} where they focused on perfect CESS, i.e., the case in which $z=t-1$. The authors showed that there exists a tradeoff between the number of contacted parties $d$ and the amount of information downloaded by a user in order to decode the secret. They derived a lower bound on {\CO} and constructed codes for the special case of $z=t-1$ using polynomial evaluation over $GF(q)$, where $q>n+v$, that achieve minimum {\CO} and {\RO} universally for all $d,\ t\leq d\leq t+v-1$, for some positive integer $v$. Zhang et al.\cite{ZYSMH12} constructed CESS codes for the special case of $z=t-1$ over $GF(q)$, where $q>n$, that achieve minimum {\CO} and {\RO} for any fixed $d$, $t\leq d \leq n$. Recently, Huang et al.\cite{HLKB15} studied the CESS problem for all $z<t$ and generalized the lower bound on {\CO}. The authors constructed explicit CESS codes for any $z$ achieving the minimum {\CO} and {\RO} for $d=n$ over $GF(q),\ q>n(n-z)$. Moreover, they proved the achievability of the lower bound on {\CO} and {\RO} universally for all possible values of $d, t\leq d \leq n$ using random linear code constructions\footnote{After the appearance of the original version of this work on Arxiv \cite{BR15}, an equivalent CESS code construction for all parameters was given independently in \cite{HLKB16}.}. In our setting, we assume that the dealer has direct access to all the parties. In the case where the dealer can access the parties through a network, Shah et al. \cite{SRR15} studied the problem of minimizing the communication cost of securely delivering the shares to the parties.
  
\noindent{\em Contributions:} 
In this paper, we introduce two new classes of explicit constructions of linear CESS codes that achieve minimum {\CO} and {\RO}. More specifically, we make the following contributions:

\begin{enumerate}
\item We describe a construction, which we call {\em Staircase Code}, that achieves minimum {\CO} and {\RO}  for any given $z$ and any given $d$. This construction generalizes the construction in Example~\ref{ex:intro}.
\item We describe a universal construction, which we call {\em Universal Staircase Code}, that achieves minimum {\CO} and {\RO} simultaneously for all possible values of $d,\ t\leq d\leq n$ and any given value of $z$.
\end{enumerate}
Moreover, we describe how to construct a class of secret sharing codes, called threshold changeable secret sharing (TCSS) codes \cite{MPSH99}, based on the introduced Staircase codes.

\noindent The Staircase codes require a small finite field $GF(q)$ of size $q> n$, which is the same requirement for Reed Solomon based SS codes\footnote{However, the constructions requires to divide  the secret into a certain number of  symbols, which may not be necessary for SS codes.} \cite{McESa81}.

\vspace{0.2cm}

\noindent{\em Organization:} 
The paper is organized as follows. In section~\ref{sec:sys}, we formulate the CESS problem, introduce the necessary notations and summarize our results. We describe the Staircase code constructions in section~\ref{sec:cons}. In section~\ref{sec:fds}, we prove that the Staircase codes for a fixed $d$ achieve secrecy and minimum {\CO} and {\RO}. In section~\ref{sec:uns}, we prove that Universal Staircase codes achieve secrecy and minimum {\CO} and {\RO}. In Section~\ref{sec:tcss} we show how to use the Staircase codes to construct threshold changeable secret sharing. We conclude in section~\ref{sec:conc}.

\section{Problem formulation and main results}\label{sec:sys}
We consider the CESS problem and   follow the majority of the notations  in \cite{HLKB15}. A secret $\mathbf{s}$ of size $k$ units is formed of $k\alpha$ symbols (1 unit = $\alpha$ symbols). The secret symbols are  drawn independently and uniformly at random from a finite alphabet, typically a finite field. A CESS code is a scheme that encodes the secret, using random keys, into $n$ shares $w_1,\dots, w_n$, of unit size each,  and distributes them to $n$ distinct parties. Let $\mrW_i$ denote the random variable representing the share of party $i$, let $\mrS$ denote the random variable representing the secret $\mbs$, let $[n]=\{1,\dots,n\}$, and for any subset $B\subseteq[n]$ denote by $\mrW_B$ the set of random variables representing the shares indexed by $B$, i.e., $\mrW_B=\{\mrW_i; i\in B\}$. Then, a CESS code must satisfy the following properties:
\begin{enumerate}
\item {\em Perfect secrecy:} Any subset of $z$ or less parties  should not be able to get any information about the secret. The perfect secrecy condition can be expressed as 
\begin{equation}\label{eq:secrecy}
H(\mrS\mid \mrW_Z)=H(\mrS),\, \forall Z\subset [n] \text{ s.t.}   \left\lvert Z \right\rvert = z.
\end{equation}

\item {\em MDS:} A user downloading any $t$ shares is able to recover the secret, i.e., 
\be 
\label{eq:mds}
H(\mrS \mid \mrW_{A})=0,\, \forall A\subseteq [n] \text{ s.t.}   \left\lvert A\right\rvert=t.
\ee
Equations~\eqref{eq:secrecy} and \eqref{eq:mds}  imply that the secret can be of at most $t-z$ units (see \cite[Proposition~1]{HLKB15}). We will take the secret to be of maximum size, i.e.,   $k=t-z$ units.
\item {\em Minimum {\CO} and {\RO}:} a user contacting any $d$ parties, $t\leq d\leq n$, is able to decode the secret by reading and downloading exactly $k+\CO(d)$ units of information in total from all the contacted shares, where 
\be 
\label{eq:CO}
\CO(d)=\dfrac{kz}{d-z}.\ee

\end{enumerate} 


\noindent Equation~\eqref{eq:CO}  represents the achievable information theoretic lower bound \cite[Theorem~3.1]{WW08},~\cite[Theorem~1]{HLKB15} on the   communication overhead, $\CO(d)$, needed to satisfy the constraints in \eqref{eq:secrecy} and \eqref{eq:mds}, when the user contacts $d$ parties\footnote{Note that a user contacting $d$ parties and achieving \eqref{eq:CO} for a threshold secret sharing with threshold $t$ downloads the same amount of information as a user contacting $d$ parties in a threshold secret sharing with threshold $d$.}. Since the amount of information read cannot be less than the downloaded amount, the following lower bound on {\RO} holds,
\begin{equation}\label{eq:RC}
\RO(d)\geq \CO(d).
\end{equation}
We will refer to a CESS code described above as an $(n,k,z,d)$ CESS code, where the threshold is $t=k+z$. For instance, the code in Example~\ref{ex:intro} is an $(4,1,1,3)$ CESS code.  We define a universal $(n,k,z)$ CESS code that  achieves minimum $\CO(d)$ and $\RO(d)$ simultaneously for all possible values of $d$.
Note that the MDS constraint can be omitted since it is subsumed by the minimum {\CO} and {\RO} constraint since it corresponds to the case of $d=t$ and  $\CO(t)=z$. However, we will make this distinction for clarity of exposition.

\vspace{0.2cm}

\noindent Given the model described above, we are ready to state our two main results. 


\begin{theorem}\label{thm:main}
The $(n,k,z,d)$ Staircase CESS code defined in Section~\ref{sec:cons1} over $GF(q)$, $q>n$, satisfies the required MDS and perfect secrecy constraints given in~\eqref{eq:secrecy}~and~\eqref{eq:mds}, and achieves optimal communication and read overheads $\CO(d)$ and $\RO(d)$ given in~\eqref{eq:CO}~and~\eqref{eq:RC} for any given $d,\ d\in\{k+z,\dots,n\}$.
\end{theorem}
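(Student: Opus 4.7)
The plan is to verify the three required conditions — MDS, perfect secrecy, and $\CO$/$\RO$ optimality at the prescribed $d$ — one at a time, each time reducing the claim to a rank computation on a minor of the Vandermonde encoder $V$. Throughout, the sub-packetization $\alpha$, the encoder $V$, and the staircase message matrix $M$ are as defined in Section~\ref{sec:cons1}, and I work over $GF(q)$ with $q > n$ so that the $n$ evaluation points of $V$ are distinct; consequently, for any $1 \le j \le d$, any $j$ rows of the first $j$ columns of $V$ form an invertible $j \times j$ Vandermonde submatrix, and this is the single linear-algebraic fact I will use repeatedly.

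For the MDS condition, a user contacting any $A \subseteq [n]$ of size $t$ downloads all $\alpha t$ of their symbols and obtains the system $C_A = V_A M$. Unfolding column by column and exploiting the staircase zero pattern of $M$, this system becomes block lower-triangular in the natural ordering of the key layers, with Vandermonde minors of $V_A$ on the diagonal. Since those minors are invertible, the secret and all keys are uniquely recovered, establishing~\eqref{eq:mds}.

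For perfect secrecy, I fix any $Z \subseteq [n]$ of size $z$ and any realization $\mbs$ of the secret, and show that $V_Z M$, viewed as a function of the random keys, is uniformly distributed on $GF(q)^{z\alpha}$. Splitting $M$ into its secret-carrying and key-carrying parts, the task reduces to surjectivity of the linear map sending the vector of keys to $V_Z$ times the key-placement matrix, and hence to a $z\alpha \times z\alpha$ Vandermonde-derived minor being full rank — which again follows from the distinct-nodes property together with the specific key placement.

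The main obstacle is the $\CO$/$\RO$ optimality at the given $d$. I would exhibit a decoder that, for any $D \subseteq [n]$ of size $d$, reads only a prescribed sub-block of symbols from each contacted share, amounting in total to $k + \CO(d) = kd/(d-z)$ units, and reconstructs the secret. The key insight is that the staircase layout is arranged so that a specific collection of columns of $C$ contains the entire secret together with strictly fewer independent keys than appear in $M$ as a whole. Restricting these columns to the rows indexed by $D$ and projecting out the zero rows of $M$ yields a square Vandermonde linear system whose coefficient matrix is invertible by the distinct-nodes argument, so the secret is recovered exactly from the read symbols. Matching the lower bound~\eqref{eq:CO} then proves $\CO$-optimality, and since the decoder consumes only what it reads, $\RO(d) = \CO(d)$, saturating~\eqref{eq:RC}. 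The delicate step is aligning the per-party read pattern with the staircase zero blocks so that the resulting linear system decouples into invertible Vandermonde minors; once that alignment is made explicit, each of the three properties reduces to a routine Vandermonde-rank check over $GF(q)$ with $q > n$.
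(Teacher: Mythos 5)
Your plan is correct and mirrors the paper's proof: MDS is established by a two-stage Vandermonde elimination (first recovering $\mathcal{D}$ and $\mathcal{R}_2$ from the tail columns via an invertible $t\times t$ minor, then substituting $\mathcal{D}$ into the head columns to recover $\mathcal{S}$ and $\mathcal{R}_1$), secrecy by showing the key-to-share map given $\mathrm{S}$ is full rank (your ``uniformity'' framing and the paper's ``$H(\mathrm{R}\mid\mathrm{W}_Z,\mathrm{S})=0$'' framing are the same rank argument), and $\CO(d)/\RO(d)$ optimality by reading only the first $k$ columns of each of the $d$ contacted shares. One small slip worth noting: in the fixed-$d$ construction the first $k$ columns of $M$ contain $[\mathcal{S}^{\,t}\ \mathcal{R}_1^{\,t}]^t$ with no zero rows, so no ``projecting out of zero rows'' is needed in the $\CO$ step — once $|I|=d$ the matrix $V_I$ is already a $d\times d$ square Vandermonde and the system $V_I[\mathcal{S}^{\,t}\ \mathcal{R}_1^{\,t}]^t$ inverts directly; the zero-row projection you describe belongs to the MDS ($d=t$) stage and to the universal construction, not to the $\CO$-optimal read at the prescribed $d$.
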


\begin{theorem}\label{thm:main2}
The $(n,k,z)$ Universal Staircase CESS code defined in Section~\ref{sec:cons2} over $GF(q)$, $q>n$, satisfies the required MDS and perfect secrecy constraints given in~\eqref{eq:secrecy}~and~\eqref{eq:mds}, and achieves optimal communication and read overheads $\CO(d)$ and $\RO(d)$ given in~\eqref{eq:CO}~and~\eqref{eq:RC} simultaneously for all $d,$ $k+z\leq d\leq n$.
\end{theorem}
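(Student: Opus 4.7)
The plan is to decompose the verification of Theorem~\ref{thm:main2} into three independent claims that correspond to the three required properties, then reduce each claim to a statement about the rank of certain submatrices of a Vandermonde-like generator. Concretely, I would first fix the universal Staircase construction of Section~\ref{sec:cons2}: the matrix $M$ of secret symbols and keys over $GF(q)$ laid out according to the staircase template, and the $n\times m$ Vandermonde matrix $V$ (for an appropriate number of columns $m$), so that the $n$ shares are the rows of $C=VM$. Because any $t$ rows of such a $V$ are full row-rank whenever $q>n$, all the decodability claims below will reduce to algebraically separating the secret block of $M$ from its key blocks in a particular submatrix of $C$.

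The first step is perfect secrecy. Fix $Z\subset[n]$ with $|Z|=z$, and view the $z\alpha$ symbols of $\mrW_Z$ as a linear image of the $k\alpha$ secret symbols and the key symbols under the matrix $V_Z$ (the $z$ rows of $V$ indexed by $Z$) applied column-by-column to $M$. The plan is to show that for any fixed value of $\mbs$, the induced map from the key vector $\mathbf{r}$ to $\mrW_Z$ is a bijection; this forces $\mrW_Z$ to be uniform over its range independently of $\mbs$, yielding $H(\mrS\mid\mrW_Z)=H(\mrS)$. The bijection reduces, column by column, to invertibility of certain $z\times z$ submatrices of $V_Z$, which follows from the Vandermonde property over $GF(q)$ with $q>n$. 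The MDS property is handled similarly: for any $A\subseteq[n]$ with $|A|=t=k+z$, the $t$ rows $V_A$ have full row-rank $t$, so from the $t\alpha$ symbols $\mrW_A$ one recovers $M$ column-by-column and therefore $\mbs$; this is precisely the case $d=t$ of the overhead statement, and serves as the base case.

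The main obstacle, and the substantive content of Theorem~\ref{thm:main2}, is the universality of the overhead: one single matrix $M$ must allow, for every $d\in\{t,\dots,n\}$, every set $D$ of $d$ contacted parties to decode the secret after reading only $k+\CO(d)$ units in total, which means reading a prescribed prefix of each share of length $\alpha(d-z+\cdots)/(d-z)$ determined by~\eqref{eq:CO}. My plan is to exploit the staircase layout of zeros in $M$: for each $d$ the "read" portion of the $d$ contacted shares depends only on the secret block and on the key blocks corresponding to layers of the staircase that are active at level $d$, with the deeper key layers zeroed out in the prefix. After this reduction, the decoding system becomes $V_D\widetilde{M}_d=\widetilde{C}_d$, where $\widetilde{M}_d$ is a square matrix formed by the secret and the active keys. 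The task is then to show $V_D\widetilde{M}_d$ has full rank, equivalently that the relevant square submatrix of $V_D$ is invertible; this again follows from the Vandermonde property provided $q>n$. The hard part is verifying the counting identity that the number of read symbols across $d$ shares exactly matches the number of unknowns (secret plus active keys), and that the staircase design places zeros precisely so that every $d$ yields a square, invertible system; I would do this by induction on $d$, starting from $d=t$ (where all keys are active, covered by the MDS step) and peeling off one layer of keys each time $d$ increases, mirroring the construction of the staircase in Table~\ref{tab:unim}.
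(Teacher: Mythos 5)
Your high-level plan — exploit the staircase zero pattern to reduce decodability and secrecy to invertibility of Vandermonde submatrices — points in the right direction, but there is a substantive gap in each of the three parts, and it is the same gap: you treat the decoding (or key-recovery) system as if it were a single square Vandermonde system, whereas in the universal construction it is not. Because the matrices $\cd_1,\dots,\cd_{h-1}$ carry entries from earlier blocks $M_i$ into later blocks $M_{i+1}$, the linear map from (secret, keys) to the observed/read symbols is a \emph{structured} map with repeated entries across blocks, not a block-diagonal Vandermonde map. Counting unknowns against equations (your ``counting identity,'' which is the telescoping identity $\sum_{i\le j} k\alpha/(\alpha_i\alpha_{i-1}) = k\alpha/\alpha_j$) shows the system is square, but squareness plus ``the Vandermonde property'' does not by itself give invertibility here. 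The paper's proof establishes invertibility constructively by a \emph{cascade}: for a reader contacting $d_j$ parties, first solve for $M_j$ (its zero block reduces $V_I M_j$ to a genuine $d_j\times d_j$ Vandermonde system), then observe that the rows of $M_{j-1}$ beyond the $d_j$-th are exactly the entries of $\cd_{j-1}\subset M_j$ just recovered, subtract them, and get another $d_j\times d_j$ Vandermonde system for the remaining rows of $M_{j-1}$; iterate back to $M_1$. That cascade, and the induction ``block index $i=j$ down to $1$'' (for each fixed $d$), is the content you would need to supply. Your proposed induction ``on $d$ from $t$ up, peeling a layer each time $d$ increases'' is a different and less clearly well-posed decomposition — for fixed $d$ the system does not split into a previously-solved $d{-}1$ piece plus one new layer — and as written it does not reach invertibility.

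The same issue appears in your MDS step: asserting that $V_A$ ($t\times n$, $t<n$) ``has full row rank $t$, so one recovers $M$ column-by-column'' is incorrect as a standalone argument — a wide matrix is not invertible, and $M$ has $n$ rows. Recovery at $d=t$ works only because of the zero pattern and the same $\cd$-block cascade; the paper handles MDS simply as the case $d=d_h=k+z$ of the general cascade, not as a separate rank statement. Likewise for secrecy: given $\mbs$, the map from the key vector to $\mrW_Z$ is indeed square (both are $z\alpha$ symbols), but to show it is injective you cannot argue ``column by column'' on $V_Z$; you must first invert a $z\times z$ submatrix of $V_Z$ to obtain $\ck_1$, then use $\mbs$ and $\ck_1$ to reconstruct $\cd_1$ and subtract it before you get a $z\times z$ system for $\ck_2$, and so on through $\ck_h$. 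The cascade through the $\cd_j$'s is not a cosmetic detail but the mechanism that makes the staircase layout work, and it is what your sketch is missing.
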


\section{Staircase code constructions} \label{sec:cons}

\subsection{Staircase code construction for fixed $d$} \label{sec:cons1}
We describe the $(n,k,z,d)$ Staircase code construction that achieves optimal communication and read overheads $\CO(d)$ and $\RO(d)$ for any given $d$, $k+z\leq d \leq n$. In this construction, we take $\alpha=d-z$.  Hence, the secret $\mathbf{s}$  of size $k$ units is formed of $k(d-z)$ symbols $s_1,\dots,s_{k\alpha},$ where $s_i\in GF(q)$ and $ q> n$. The symbols $s_i$ are arranged in an $\alpha \times k$ matrix $\cf$. The construction uses $z\alpha$ iid random keys drawn uniformly at random from $GF(q)$ and independently of the secret. The keys are partitioned into two matrices $\mathcal{R}_1$ and $\mathcal{R}_2$ of dimensions $z\times k$ and $z\times(\alpha-k)$ respectively. Let $\cal{D}$ be the transpose of the last $(\alpha-k)$ rows of the matrix $\begin{bmatrix} \mathcal{S}\\ \mathcal{R}_1\end{bmatrix}$\footnote{If $\alpha-k\leq z$, i.e., $d\leq 2z+k$, then $\mathcal{D}$ consists of the transpose of the last $\alpha-k$ rows of $\mathcal{R}_1$.} and let $\mathbf{0}$ be the all zero square matrix of dimensions $(\alpha-k)\times(\alpha-k)$, note that $\alpha-k\geq 0$ since $d\geq z+k$. The key ingredient of the construction is to arrange the secret and the keys in a $d\times \alpha$ matrix $M$ defined in Table~\ref{tab:M}. The inspiration behind this construction is the class of Product Matrix codes that minimizes the repair bandwidth in distributed storage systems\footnote{After the appearance of the original version of this work on Arxiv \cite{BR15}, a connection between the family of regenerating codes and CESS codes was explored in more details in \cite{RKV16}.} \cite{RSK11}.
\begin{table}[h!]\label{tab:M}
\normalsize
\linespread{1}
\centering
  \resizebox{0.3\textwidth}{!}{
  \begin{tikzpicture}[baseline=(current  bounding  box.center)]
\tikzstyle{stealth} = [draw=none,text=black]

\draw[draw=none,fill=xgray] (0.463,-0.91) -- (0.463,-0.48) -- (1.38,-0.48) -- (1.38,-.91) -- cycle;
\node[stealth] (1) at (0,0){$
M=\left[\!\begin{array}[h!]{c:c}
\multirow{2}{*}{$\cf$}&\mathcal{D}\\ \cdashline{2-2}
& \multirow{2}{*}{$\cR$}\\ \cdashline{1-1}
\multirow{2}{*}{$\mathcal{R}_1$} & ~\\ \cdashline{2-2}
& \mathbf{0}\\
\end{array}\!\right]$};
\node[stealth,gray] (4) at (-1.35,-0.35) {$\scriptstyle d\times \alpha$};
\draw[<->,gray,gray] (0.45,1.1) to (1.35,1.1);
\node[stealth,gray] (2) at (0.9,1.3) {$\scriptstyle \alpha-k$};
\draw[<->,gray,gray] (-0.45,1.1) to (0.45,1.1);
\node[stealth,gray] (2) at (0.05,1.3) {$\scriptstyle k$};
\draw[<->,gray] (-0.55,0) to (-0.55,0.95);
\node[stealth,gray] (3) at (-0.7,0.45) {$\scriptstyle \alpha$};
\draw[<->,gray] (-0.55,-0.93) to (-0.55,0);
\node[stealth,gray] (3) at (-0.7,-0.5) {$\scriptstyle z$};
\draw[<->,gray] (1.45,0.47) to (1.45,0.95);
\node[stealth,gray] (3) at (1.6,0.7) {$\scriptstyle k$};
\draw[<->,gray] (1.45,-0.47) to (1.45,0.47);
\node[stealth,gray] (3) at (1.6,0) {$\scriptstyle z$};
\filldraw[black] (1.9,0) circle (0.5pt); 
\draw[<->,gray] (1.45,-0.93) to (1.45,-0.47);
\node[stealth,gray] (3) at (1.8,-0.65) {$\scriptstyle \alpha-k$};
\draw[thin] (-0.42,-0.91) -- (0.464,-0.91) -- (0.464,-0.477) -- (1.36,-0.477);
\draw[thin] (0.465,-0.91) -- (1.36,-0.91);
\draw[color=white,fill=white] (0.44,-0.927) -- (0.47,-0.927) -- (0.47,-0.99) -- cycle;

%
%
%

  \end{tikzpicture}
  }
  \caption{The structure of the matrix $M$ that contains the secret  and keys in the Staircase code construction for fixed $d$.}
  \label{tab:M}
\end{table}

\vspace{0.2cm}

\noindent{\em Encoding: }Let $V$ be an $n\times d$ Vandermonde\footnote{We require all square sub-matrices formed by consecutive columns of $V$ to be invertible. Vandermonde and Cauchy matrices satisfy this property.} matrix defined over $GF(q)$. The matrix $M$, defined in Table~\ref{tab:M}, is multiplied by $V$ to obtain the matrix $C=VM$. The $n$ rows of $C$ form the $n$ different shares.
\vspace{0.2cm}

\noindent{\em Decoding: } A user contacting any $t=k+z$ parties downloads all the  shares  of the contacted parties.  A user contacting  $d$ parties, indexed by $I\subseteq[n]$, downloads the first $k$ symbols from each contacted party corresponding to $v_i \bbm \cf & \ck_1\ebm^t, i\in I$ (the superscript $t$ denotes the transpose of  a matrix). Theorem~\ref{thm:main} guaranties that the user will be able to decode the secret in both cases.

\addtocounter{example}{-1}
\begin{example}[Continued]
 \label{sec:fde}
We give the  details of  the construction of the $(n,k,z,d)=(4,1,1,3)$ CESS code of Example~\ref{ex:intro}. We take $\alpha=d-z=2$, thus the secret $\mathbf{s}$ is formed of $k\alpha=2$ symbols $s_1,\ s_2$ uniformly distributed over $GF(q)$, $q=5>n=4$. The construction uses $z\alpha=2$ iid random keys $r_1,\ r_2$ drawn uniformly at random over $GF(5)$ and independently of the secret. The keys are partitioned into two matrices $\ck_1$ and $\ck_2$ of dimensions $z\times k=1\times 1$ and $z\times (\alpha-k)=1 \times 1$, respectively. The matrix $\cal{D}$ is the transpose of the last $\alpha-k=1$ row of $\mathcal{R}_1$. Hence, we have, $\mathcal{R}_1= \mathcal{D}=  r_1,\  \mathcal{R}_2= r_2,$ and $\mathcal{S}=\bbm
s_1\\
s_2 \\
\ebm.$
The secret and the keys are arranged in a $d\times \alpha=3\times 2$ matrix $M$. Let $V$ be an $n\times d=4\times 3$ Vandermonde matrix. $M$ and $V$ are given again in~\eqref{eq:fdm}. \be
\label{eq:fdm}
M=\begin{bmatrix}
s_1 &  r_1\\
s_2 & r_2\\ 
r_1 & 0\\
\end{bmatrix} \text { and }
V=\begin{bmatrix}
1&1&1\\
1&2&4\\
1&3&4\\
1&4&1\\
\end{bmatrix}.
\ee
%
%


\noindent The shares are the rows of the matrix $C=VM$ as given in Table~\ref{tab:intro}.
We want to check that this code  satisfies the following properties: 

\vspace{0.2cm}

\noindent{\em 1) Minimum {\CO} and {\RO} for $d=3$:} 
We check that a user contacting $d=3$ parties can reconstruct the secret with minimum {\CO} and {\RO}. For instance, if a user contacts the first 3 parties and downloads the first symbol of each contacted share, then the downloaded data is given by,
\be
\label{eq:d3}
\begin{bmatrix}
1&1&1\\
1&2&4\\
1&3&4\\
\ebm
\begin{bmatrix}
s_1\\
s_2 \\ 
r_1\\
\end{bmatrix} .
\ee
The matrix on the left is a $3\times 3 $ square Vandermonde matrix, hence invertible. Therefore, the user can  decode the secret (and $r_1$). This remains true irrespective of which $3$ parties are contacted.
The user reads and downloads $3$ symbols of size $3/\alpha=3/2$ units resulting in minimum overheads, $\CO(3)=\RO(3)=3/2-k=1/2$, as  given in~\eqref{eq:CO}~and~\eqref{eq:RC}.

\vspace{0.2cm}

\noindent{\em 2) MDS:} We check that a user contacting $t=k+z=2$ parties can reconstruct the secret. Suppose the user contacts parties $1$ and $2$ and downloads all their shares given by  
\be
\label{eq:d2}
\bbm
1&1&1\\
1&2&4\\
\ebm \bbm
s_1 & r_1\\
s_2 & r_2\\
r_1 & 0\\
\ebm.
\ee
The system in~\eqref{eq:d2} is equivalent to the two following systems $
\bbm
1&1&1\\
1&2&4\\
\ebm\begin{bmatrix}
s_1\\
s_2 \\ 
r_1\\
\end{bmatrix} $ and 
 $\bbm
1&1\\
1&2\\
\ebm \bbm r_1\\r_2\ebm.$
The decoder uses the latter system  to decode $r_1$ and $r_2$. This is possible because the matrix on the left is a  square Vandermonde matrix, hence invertible.
Then, the decoder subtracts the obtained value of $r_1$ from the former system to obtain again the following invertible system $ \bbm
1&1\\
1&2\\
\ebm \bbm s_1\\s_2\ebm.$ The decoder then decodes $s_1$ and $s_2$. Again, this procedure is possible for any 2 contacted parties .

%


\vspace{0.2cm}
 
\noindent{\em 3) Perfect secrecy:}
At a high level, perfect secrecy is achieved here because each symbol in a share is ``padded" with at least one distinct key statistically independent of the secret,  making the shares of any party independent of the secret.
\end{example}

\subsection{Universal Staircase code construction}\label{sec:cons2}
We describe the $(n,k,z)$ Universal Staircase code construction that achieves optimal communication and read overheads $\CO(d)$ and $\RO(d)$ simultaneously for all possible values of $d$, i.e., $k+z\leq d\leq n$. Let $d_1=n,d_2=n-1,\dots,d_{h}=k+z$, with $h=n-k-z+1$, and  
 $\alpha_i=d_i-z,\ i=1,\dots, h$.   Choose $\alpha=LCM(\alpha_1,\alpha_2,\dots,\alpha_{h-1})$, that is the least common multiple of all the $\alpha_i$'s except for the last $\alpha_{h}=k$. The secret $\mathbf{s}$ consists of $k\alpha$ symbols $s_1,\dots,s_{k\alpha},$ uniformly distributed over $GF(q),\ q>n$, arranged in an $\alpha_1\times k\alpha/\alpha_1$ matrix $\cf$.

\noindent The construction uses $z\alpha$ iid random keys, drawn uniformly at random from $GF(q)$ and independently of the secret. The keys are partitioned into $h$ matrices $\ck_i, i=1,\dots,h,$ of respective dimensions $z\times k\alpha / \alpha_i\alpha_{i-1}$ (take $\alpha_0=1$). The matrices $\ck_1,\dots,\ck_i$ consist of  the overhead of keys decoded  by a user contacting $d_i$ parties.
We form $h$ matrices $M_i,$ $i=1,\dots,h,$ as follows,
\be
\normalsize
\linespread{1}
  \resizebox{0.93\textwidth}{!}{
\label{eq:matcon}
  \begin{tikzpicture}[baseline=(current  bounding  box.center)]
\tikzstyle{stealth} = [draw=none,text=black]
\node[stealth] (1) at (0,0){$
M_1=\hspace{0.3cm} \bbm
\multirow{2}{*}{$\cf$} \\
\\
 \ck_1\\
\ebm  \quad$,};

\normalsize
\linespread{1}

\node[stealth] (2) [right=0.3cm of 1] {$M_2=\hspace{0.3cm}\bbm
\cd_1 \\ \ck_2 \\ \mathbf{0} \ebm \quad,\ \dots \ ,$} ;
\node[stealth] (3) [right=0.3cm of 2] {$M_j=\hspace{0.3cm}\bbm
\cd_{j-1} \\ \ck_j \\ \mathbf{0} \ebm \quad, \  \dots \ ,$};
\node[stealth] (4) [right=0.3cm of 3] {$M_{h}=\hspace{0.3cm}\bbm
\cd_{h-1} \\ \ck_{h} \\ \mathbf{0} \ebm \quad.$};

\def\y{0.03}
\def\h{0.05}
\def\x{0.2}
\def\j{1.3}
\draw[<->,gray] (-0.08+0.15,-0.78-\y) to (0.82,-0.78-\y);
\node[stealth,gray] (4) at (0.45,-1-\y) {\scriptsize $k\alpha/\alpha_1$};

\draw[<->,gray,xshift=0.03cm] (3.2+\h+1*\x,-0.78-\y) to (3.93+\h+1*\x,-0.78-\y);
\node[stealth,gray] at (3.6+\h+1*\x,-1) {\scriptsize $ k\alpha/\alpha_1\alpha_2$};

\draw[<->,gray,xshift=0.11cm] (7.75+\h,-0.78-\y) to (3.65+3.88+\h+\j,-0.78-\y);
\node[stealth,gray,xshift=0.11cm] at (7+\h+\j,-1) {\scriptsize $ k\alpha/\alpha_{j-1}\alpha_{j}$};

\draw[<->,gray,xshift=0.25cm] (11.13+\h+\j,-0.78-\y) to (11.12+1.1+\h+\j,-0.78-\y);
\node[stealth,gray,xshift=0.25cm] at (11.72+\h+\j,-1) {\scriptsize $\alpha/\alpha_{h-1}$};

\draw[<->,gray] (-0.05,-0.7) to (-0.05,0.7);
\node[stealth,gray] at (-0.25,0) {\scriptsize $n$};
\draw[<->,gray] (0.85+\h,-.7) to (0.85+\h,0);
\draw[<->,gray] (0.85+\h,0) to (0.85+\h,0.7);
\node[stealth,gray] at (1+\h,-0.35) {\scriptsize $z$};
\node[stealth,gray] at (1.07+\h,0.35) {\scriptsize $\alpha_1$};

\draw[<->,gray,xshift=0.03cm] (3.15+0.2,-0.7) to (3.15+0.2,0.7);
\node[stealth,gray,xshift=0.05cm] at (3.15,0) {\scriptsize $n$};
\draw[<->,gray,xshift=0.01cm] (3.3+0.75+\h+1*\x,-.7) to (3.3+0.75+\h+1*\x,-0.25);
\draw[<->,gray,xshift=0.01cm] (3.3+0.75+\h+1*\x,-0.25) to (3.3+0.75+\h+1*\x,0.25);
\draw[<->,gray,xshift=0.01cm] (4.05+\h+1*\x,0.25) to (4.05+\h+1*\x,0.7);
\node[stealth,gray,xshift=0.01cm] at (3.35+0.85+\h+1*\x,0) {\scriptsize $z$};
\node[stealth,gray,xshift=0.01cm] at (3.35+0.92+\h+1*\x,0.45) {\scriptsize $\alpha_2$};
\node[stealth, gray,xshift=0.01cm] at (3.35+0.85+\h+1*\x,-0.45) {\scriptsize 1};

\draw[<->,gray,xshift=0.15cm] (6.15+0.2+\j,-0.7) to (6.15+0.2+\j,0.7);
\node[stealth,gray,xshift=0.15cm] at (6.15+\j,0) {\scriptsize $n$};
\draw[<->,gray,xshift=0.12cm] (7.45+\h+1*\x+\j,-.7) to (7.45+\h+1*\x+\j,-0.25);
\draw[<->,gray,xshift=0.12cm] (7.45+\h+1*\x+\j,-0.25) to (7.45+\h+1*\x+\j,0.25);
\draw[<->,gray,xshift=0.12cm] (7.45+\h+1*\x+\j,0.25) to (7.45+\h+1*\x+\j,0.7);
\node[stealth,gray,xshift=0.12cm] at (7.45+0.15+\h+1*\x+\j,0) {\scriptsize $z$};
\node[stealth,gray,xshift=0.12cm] at (7.45+0.21+\h+1*\x+\j,0.45) {\scriptsize $\alpha_j$};
\node[stealth, gray,xshift=0.12cm] at (7.45+0.45+\h+1*\x+\j,-0.45) {\tiny $n-d_j$};

\draw[<->,gray,xshift=0.25cm] (10.85+0.2+1*\j,-0.7) to (10.85+0.2+1*\j,0.7);
\node[stealth,gray,xshift=0.25cm] at (10.85+1*\j,0) {\scriptsize $n$};
\draw[<->,gray,xshift=0.22cm] (12.4+\h+1*\j,-.7) to (12.4+\h+1*\j,-0.25);
\draw[<->,gray,xshift=0.22cm] (12.4+\h+1*\j,-0.25) to (12.4+\h+1*\j,0.25);
\draw[<->,gray,xshift=0.22cm] (12.4+\h+1*\j,0.25) to (12.4+\h+1*\j,0.7);
\node[stealth,gray,xshift=0.22cm] at (12.4+0.15+\h+1*\j,0) {\scriptsize $z$};
\node[stealth,gray,xshift=0.22cm] at (12.4+0.15+\h+1*\j,0.45) {\scriptsize $k$};
\node[stealth, gray,xshift=0.22cm] at (12.4+0.25+0.15+\h+1*\j,-0.45) {\tiny $h-1$};

\end{tikzpicture}
}
\ee
Each matrix $\cd_j$ is formed of the $\left(n-j+1\right)^{th}$ row of $\bbm M_1 \ M_2 \dots M_j \ebm$ wrapped around to make a matrix of dimensions $\alpha_{j+1} \times k\alpha/\alpha_j \alpha_{j+1}$  for $j=1,\dots, h-1$. The $\mathbf{0}$'s are the all zero matrices used to complete the $M_{i}$'s to $n$ rows. The secret and the keys are arranged in the matrix $M=\bbm M_1\dots M_h\ebm$ defined in Table~\ref{tab:unim}.

\begin{table} \label{tab:unim}
\normalsize
\linespread{1}
\centering
\begin{tikzpicture}[baseline=(current  bounding  box.center)]
\tikzstyle{stealth} = [draw=none,text=black]

\def\x{-0.035}
\draw[draw=none,fill=xgray]  (-1.04+\x,-1.38) -- (-1.04+\x,-0.96) -- (-0.182+\x,-0.96) -- (-.182+\x,-.49) -- (.677+\x,-.49) -- 
(.677+\x,-.3) -- (1.491+\x,-0.3) -- (1.491+\x,0) -- (2.75,0) -- (2.75,-1.38) -- cycle; 

\node[stealth] at (0,0)
{$M=\left[\!
\begin{array}{c:c:c:c:c} 
 ~ & ~ &\multirow{2}{*}{$\cd_2$}  & \dots & \cd_{h-1}\\
 ~& \multirow{2}{*}{$\cd_1$} & & & \multirow{2}{*}{$\ck_h$} \\
\multirow{2}{*}{$\mathcal{S}$} & & \multirow{2}{*}{$\ck_3$}  &  \dots  & \\ 
 & \multirow{2}{*}{$\ck_2$} & & & \multirow{3}{*}{$\mathbf{0}$} \\
\multirow{2}{*}{$\ck_1$} &  & \multirow{2}{*}{$\mathbf{0}$} & \dots & \\
 & \mathbf{0} & & & \\
 \end{array}\!\right].$};
 
 \node[stealth] at (4.2,-0.5) {\Large $\substack{\text{\em staircase}\\ \text{\em structure}}$};
 \draw[->] (3.5,-0.6) ..controls (3,-1).. (2.4,-0.9);

\node[stealth,gray] at (-2.8,-0.4) {\scriptsize $n\times \alpha$};
\def\y{0.1}
\draw[thin] (-1.9+\x,-1.38) -- (-1.04+\x,-1.38) -- (-1.04+\x,-0.96) -- (-0.182+\x,-0.96) -- (-.182+\x,-.49) -- (.677+\x,-.49) -- 
(.677+\x,-.3) -- (1.491+\x,-0.3) -- (1.491+\x,0) -- (2.74,0); 
\draw[fill=white,draw=white] (-1.1,-1.386) -- (2.6,-1.386) -- (2.6,-1.49) -- (-1.1,-1.49) -- cycle;
\draw[thin, gray,decorate,decoration={brace,amplitude=5pt,mirror},yshift=-0.8pt]  (-1.93+\x,-1.5+\y) -- (-1.04+\x,-1.5+\y);
\node[stealth,color=gray] at (-1.45+\x,-2+\y) {$M_1$};
\draw[thin, gray,decorate,decoration={brace,amplitude=5pt,mirror},yshift=-0.8pt]   (-1.04+\x,-1.5+\y) -- (-0.182+\x,-1.5+\y) ;
\node[stealth,color=gray] at (-0.6,-2+\y) {$M_2$};
\draw[thin, gray,decorate,decoration={brace,amplitude=5pt,mirror},yshift=-0.8pt]  (-.182+\x,-1.5+\y) -- (.677+\x,-1.5+\y);
\node[stealth,color=gray] at (0.3+\x,-2+\y) {$M_3$};
\node[stealth,gray] at (1.1+\x,-2+\y) {$\dots$};
\draw[thin, gray,decorate,decoration={brace,amplitude=5pt,mirror},xshift=0.4pt,yshift=-0.8pt]  (1.446,-1.5+\y) -- (2.7,-1.5+\y);
\node[stealth,color=gray] at (2.12,-2+\y) {$M_h$};
\draw[thin] (-1.11,-1.38) -- (2.7,-1.38);

 \end{tikzpicture}
 \caption{ The structure of the matrix $M$ that contains the secret  and keys in the universal Staircase code construction. }
 \label{tab:unim}
\end{table}

The matrix $M$ is characterized by a special structure resulting from carefully choosing the entries of the $\cd_j$'s and placing the all zero sub-blocks in a staircase shape, giving these codes their name. This staircase shape allows to achieve optimal communication and read overheads $\CO$ and $\RO$ for all  possible $d$.

\noindent{\em Encoding: }The encoding is similar to the Staircase code construction. Let $V$ be an  $n\times n$ Vandermonde matrix defined over $GF(q)$. The matrix $M$, defined in Table~\ref{tab:unim}, is multiplied by $V$ to obtain the matrix  $C=VM$. The $n$ rows of $C$ form the $n$ different shares.

\vspace{0.2cm}

\noindent{\em Decoding:} To reconstruct the secret, a user contacting any $d_j$ parties indexed by $I\subseteq [n]$ downloads the first $k\alpha/\alpha_j$ symbols from each contacted party corresponding to $v_i \bbm M_1 \dots M_j \ebm$, for all $i\in I$.

We postpone the example of a Universal Staircase code to section~\ref{sec:uniex} to have it next to the proof of Theorem~\ref{thm:main2}.

\section{Staircase Code for fixed $d$}\label{sec:fds}
\begin{proof}[Proof of Theorem~\ref{thm:main}] \label{sec:proof1}
Consider the $(n,k,z,d)$ Staircase code defined in Section~\ref{sec:cons1}. We prove Theorem~\ref{thm:main} by establishing the following properties of the code:

\vspace{0.2cm}

\noindent{\em 1)} {\em Minimum $\CO(d)$ and $\RO(d)$:} We prove that a user contacting any $d$ parties can reconstruct the secret while incurring  minimum {\CO} and {\RO}. A user contacting any $d$ parties downloads the first $k$ symbols of each party. Let $I\subset[n],\ |I|=d,$ be the set of indices of the contacted parties, then the downloaded data is given by $V_{I}
\begin{bmatrix}
\mathcal{S}& \mathcal{R}_1
\end{bmatrix}^t,
$
where $V_I$ is a $d\times d$ square Vandermonde matrix formed of the rows of $V$ indexed by $I$, hence invertible. The user can always decode the secret (and the keys in $\ck_1$) by inverting $V_I$. The code is optimal on communication and read overheads $\CO(d)$ and $\RO(d)$, because the user only reads and downloads $kd$ symbols of size $kd/\alpha=kd/(d-z)$ units resulting in an overhead of $kd/\alpha-k=kz/\alpha=kz/(d-z)$ achieving the optimal $\CO(d)$ and $\RO(d)$ given in~\eqref{eq:CO}~and~\eqref{eq:RC}.

\vspace{0.2cm}

\noindent{\em 2)} {\em MDS property:} We prove that a user contacting any $t=k+z$ parties and downloading all their shares can reconstruct the secret.  Let $I\subset[n], |I|=t$, be the set of indices of the contacted parties. The information downloaded by the user is $V_{I}M$ and is given by,
\begin{equation*}
V_{I}\begin{bmatrix}
\mathcal{S} & \mathcal{D}\\
\multirow{2}{*}{$\mathcal{R}_1$}& \mathcal{R}_2\\
& \mathbf{0}
\end{bmatrix}.
\end{equation*}
First, we show that  the user can decode the entries of $\cd$ and $\ck_2$. The decoder considers the system,
\begin{equation}
\label{eq:all0}
V_{I}\begin{bmatrix}
\mathcal{D}&
\mathcal{R}_2&
\mathbf{0}
\end{bmatrix}^t=V'_{I}\begin{bmatrix}
\mathcal{D}&
\mathcal{R}_2
\end{bmatrix}^t.
\end{equation}
Recall that the dimensions of the all zero matrix in~\eqref{eq:all0} are $(\alpha-k)\times (\alpha-k)$, then $V'_{I}$ is a $(k+z)\times (k+z)$ square Vandermonde matrix formed by the first $(k+z)$ columns of $V_I$. Therefore, the user can always decode the entries of $\cd$ and $\ck_2$ because $V'_I$ is invertible.
Second, we prove that the user can always decode the entries of $\cf$ and hence reconstruct the secret. Recall that $\cal{D}$ is the transpose of the last $\alpha-k$ rows of $M_1\triangleq\bbm \mathcal{S} & \mathcal{R}_1\ebm^t$. By subtracting the previously decoded entries of $\cal{D}$ from 
$V_I\bbm
\mathcal{S}& \mathcal{R}_1
\ebm^t,$
the user obtains $V'_IM'_1$, where $V'_I$ is defined above and $M'_1$ is a $(k+z)\times k$ matrix formed by the first $k+z$ rows of $M_1$. Therefore, the user can always decode the entries of $M'_1$ because $V'_I$ is invertible. If $k+z\geq \alpha$, then $\cf$ is directly obtained since it is contained in $M'_1$. Otherwise, $M'_1$ consists of the first $k+z$ rows of $\cf$. The remaining rows of $\cf$ are contained in $\cd$ and were previously decoded. In both cases, the user can  decode all  the secret symbols $s_1,\dots,s_{k\alpha}$.
\vspace{0.2cm}

\noindent{\em 3)} {\em Perfect secrecy:}
We prove that for any subset $Z\subset [n]$, $\left\lvert Z \right\rvert =z$, the collection of shares indexed by $z$, denoted by $\ce_Z=\{w_i,i\in Z\}$, does not reveal any information about the secret as given in equation~\eqref{eq:secrecy}, i.e., $H(\mrS\mid \mrW_Z)=H(\mrS)$. Let $\mrR$ denote the random variable representing all the random keys, then it suffices to prove that $H(\mrR \mid \mrW_Z,\mrS)=0$ as detailed in the Appendix. Therefore, we need to  show that given the secret $\mathbf{s}$ as side information, any collection of $z$ shares can decode all the random keys. A collection of $\mathcal{W}_Z$ shares can be written as 
\begin{equation}\label{eq:vz}V_{Z}\begin{bmatrix}
\mathcal{S} & \mathcal{D}\\
\multirow{2}{*}{$\mathcal{R}_1$}& \mathcal{R}_2\\
& \mathbf{0}
\end{bmatrix},\end{equation}
where $V_Z$ is a $z\times d$ matrix corresponding to the rows of $V_Z$ indexed by $Z$. The linear system in~\eqref{eq:vz} can be divided into two systems as follows,
\begin{align}
V_{Z}& \begin{bmatrix} \mathcal{S}  & \mathcal{R}_1  \end{bmatrix}^t, \label{eq:adv_obs} \\
V_{Z} &\begin{bmatrix}  \mathcal{D} & \mathcal{R}_2 & \mathbf{0}\\ \end{bmatrix}^t.\label{eq:adv_ob2}
\end{align}
Given the secret as side information, it can be subtracted from  \eqref{eq:adv_obs}, which  becomes
\bes
V_{Z} \begin{bmatrix} \mathbf{0}  & \mathcal{R}_1 \end{bmatrix}^t=V''_Z \mathcal{R}_1,
\ees
where, $V''_Z$ is a $z\times z$ square Vandermonde matrix consisting of the last $z$ columns of $V_Z$. The entries of $\ck_1$ can always be decoded because $V''_Z$ is invertible. Now that $\ck_1$ is decoded and we have $\cf$ as side information, we can obtain $\cd$ as the last $\alpha-k$ rows of $\bbm \cf&\ck_1\ebm^t$. Then, the entries of $\cd$ are subtracted from the second system to obtain $V^*_Z\ck_2$, where $V^*_Z$ is a $z\times z $ square Vandermonde matrix consisting of the $(k+1)^{th}$ to the $(k+z)^{th}$ columns of $V_Z$. Hence, the entries of $\ck_2$ can always be decoded because $V^*_Z$ is invertible. Therefore, $H(\mrR \mid \mrW_Z,\mrS)=0$, $\forall \ Z,\ Z\subset [n],\ \left\lvert Z\right\rvert = z $ and perfect secrecy is achieved.

\end{proof}

\section{Universal staircase codes}\label{sec:uns}
\subsection{Example}
\label{sec:uniex}
We describe here the construction of an $(n,k,z)=(4,1,1)$ Universal Staircase code over $GF(q),\ q=5>n=4$, by following the construction in Section~\ref{sec:cons2}.
 We have $d_1=4,\ d_2=3$, $d_3=2$ and $\alpha_1=3$, $\alpha_2=2$, $\alpha_3=1$ and  $\alpha= LCM(\alpha_1,\alpha_2)=LCM(3,2)=6$. The secret $\mathbf{s}$ is formed of $k\alpha=6$ symbols uniformly distributed over $GF(5)$. The construction uses $z\alpha=6$ iid random keys drawn uniformly at random from $GF(5)$ and independently of the secret. The secret symbols and the random keys are arranged in the following matrices,
\bes
\cf=\bbm
s_1 & s_4\\
s_2 & s_5\\
s_3 & s_6\\
\ebm,
\quad \ck_1=\bbm
r_1 & r_2  \ebm,
\quad \ck_2=
\bbm r_3\ebm
\quad
\text{and } \quad  \ck_3=\bbm r_4&r_5&r_6\ebm.
\ees
To build the matrix $M$ which will be used for encoding the secret, we start with
\bes
M_1=\bbm
\multirow{3}{*}{$\mathcal{S}$}\\ \\ \\
\ck_1\\ 
\ebm
=\bbm
s_1 & s_4\\
s_2 & s_5\\
s_3 & s_6\\
r_1 & r_2\\
\ebm.
\ees
Then, $\cd_1$ is the $\alpha_2 \times k\alpha/\alpha_1\alpha_2=2\times1$ matrix that contains the symbols of the $n^{th}$ row of $M_1$, i.e., $\cd_1=\bbm
r_1& r_2\ebm^t$. Therefore, $M_2=\bbm \cd_1& \ck_2&\mathbf{0}\ebm^t=\bbm
r_1&r_2&r_3&0\ebm^t$. Similarly, we have $\cd_2=\bbm s_3&s_6&r_3\ebm$ and $M_3=\bbm s_3&s_6&r_3 \\ r_4&r_5&r_6\\ 0&0&0\\ 0&0&0\ebm$. We obtain $M$ by concatenating $M_1,\ M_2$ and $M_3$,
%
\begin{equation} \label{eq:unmat}
\linespread{1}
\begin{tikzpicture}[baseline=(current  bounding  box.center)]
\tikzstyle{stealth} = [draw=none,text=black]

\draw[draw=none, fill=xgray]  (-1.84,-0.92) -- (-0.345,-0.92) -- (-0.345,-0.475)  -- (0.4,-0.475) --(0.4,0) -- (2.52,0) -- (2.52,-0.92) -- cycle;

\node[stealth] (1) at (0,0) {$M=\left[\!
\begin{array}[h!]{cccccc}
s_1 & s_4 &\red r_1 & \blue s_3 &  \blue s_6 & \blue  r_3 \\
s_2 & s_5 & \red r_2 & r_4 &  r_5 &   r_6 \\
\blue s_3 &\blue  s_6 & \blue  r_3 &  0 &   0 &   0\\
\red r_1 &\red r_2 & 0 &   0 &   0 &   0 \\ 
\end{array}\!\right].$};

\def\y{0.05}
\draw[thin, gray,decorate,decoration={brace,amplitude=10pt,mirror},yshift=-0.4pt] (-1.76,-0.91-\y) -- (-0.345,-0.91-\y);
\node[stealth,color=gray] at (-1,-1.6) {$M_1$};
\draw[thin, gray,decorate,decoration={brace,amplitude=5pt,mirror},yshift=-0.4pt] (-0.345,-0.91-\y)  -- (0.4,-0.91-\y) ;
\node[stealth,gray] at (0.08,-1.6) {$M_2$};
\draw[thin, gray,decorate,decoration={brace,amplitude=10pt,mirror},yshift=-0.4pt] (0.4,-0.91-\y) -- (2.52,-0.91-\y);
\node[stealth,gray] at (1.5,-1.6) {$M_3$};
\draw[thin] (-1.76,-0.91) -- (-0.345,-0.91) -- (-0.345,-0.475)  -- (0.4,-0.475) --(0.4,0) -- (2.52,0) ;
\draw[thin](-0.345,-0.91)--(2.52,-0.91);
\end{tikzpicture}
\end{equation}
Here, $V$ is the $n\times n=4\times 4$ Vandermonde matrix over $GF(5)$ given in~\eqref{eq:unv}. The shares are given by the rows of the matrix $C=VM$ and shown in Table~\ref{tab:uniex}.

\begin{equation}
\label{eq:unv}
V= \begin{bmatrix}
1 & 1 & 1 & 1\\
1 & 2 & 4 & 3\\
1 & 3 & 4 & 2\\
1 & 4 & 1 & 4\\
\end{bmatrix}.
\end{equation}

\begin{table}[h!]
\centering
\linespread{1}
\normalsize
\begin{tabular}[h!]{c|c|c|c}
Party 1 & Party 2 & Party 3 & Party 4\\ \hline
$s_1+s_2+s_3+r_1$ & $s_1+2s_2+4s_3+3r_1$ &$s_1+3s_2+4s_3+2r_1$&$s_1+4s_2+s_3+4r_1$\\
$s_4+s_5+s_6+r_2$ & $s_4+2s_5+4s_6+3r_2$ & $s_4+3s_5+4s_6+2r_2$ & $s_4+4s_5+s_6+4r_2$ \\
\red $r_1+r_2+r_3$ &\red  $r_1+2r_2+4r_3$ & \red $r_1+3r_2+4r_3$ & \red $r_1+4r_2+r_3$ \\
\blue $s_3+r_4$ & \blue $s_3+2r_4$ & \blue $s_3+3r_4$ & \blue $s_3+4r_4$ \\
\blue $s_6+r_5$ & \blue $s_6+2r_5$ & \blue $s_6+3r_5$ & \blue $s_6+4r_5$ \\
\blue $r_3+r_6$ & \blue $r_3+2r_6$ & \blue $r_3+3r_6$ & \blue $r_3+4r_6$ \\
\end{tabular}
\caption{An example of a universal Staircase code for $(n,k,z)=(4,1,1)$ over $GF(5)$.}
\label{tab:uniex}
\end{table}
\noindent The constructed Universal Staircase code satisfies the following properties:
\vspace{0.2cm}

\noindent{\em 1) MDS:} We check that a user contacting $d_3=k+z=2$ parties can decode the secret. Suppose that  the user contacts parties $1$ and $2$. The data downloaded by the user is $V_{\{1,2\}}M$ and is given by,

\begin{equation}
\label{eq:mdsuni}
\begin{tikzpicture}[baseline=(current  bounding  box.center)]
\tikzstyle{stealth} = [draw=none,text=black]
\linespread{1}
\node[stealth] (1) at (0,0) {$\begin{bmatrix}
1 & 1 & 1 & 1 \\
1 & 2 & 4 & 3 \\
\end{bmatrix}$};
\node[stealth] (2) [right=-0.1cm of 1] {$
\left[\!
\begin{array}[h!]{cc:c:ccc}
s_1 & s_4 & r_1 & s_3 & s_6 & r_3 \\
s_2 & s_5 & r_2 & r_4 & r_5 & r_6 \\
s_3 & s_6 & r_3 &  0 &   0 &   0\\
r_1 & r_2 & 0 &   0 &   0 &   0 \\ 
\end{array}\!\right].$};

\def\y{0.05}
\draw[thin, gray,decorate,decoration={brace,amplitude=10pt,mirror},xshift=3.2cm,yshift=-0.4pt] (-1.76,-0.91-\y) -- (-0.32,-0.91-\y);
\node[stealth,color=gray,xshift=3.2cm,] at (-1,-1.6) {$M_1$};
\draw[thin, gray,decorate,decoration={brace,amplitude=5pt,mirror},xshift=3.2cm,,yshift=-0.4pt] (-0.32,-0.91-\y)  -- (0.4,-0.91-\y) ;
\node[stealth,gray,xshift=3.2cm,] at (0.08,-1.6) {$M_2$};
\draw[thin, gray,decorate,decoration={brace,amplitude=10pt,mirror},xshift=3.2cm,,yshift=-0.4pt] (0.4,-0.91-\y) -- (2.52,-0.91-\y);
\node[stealth,gray,xshift=3.2cm,] at (1.5,-1.6) {$M_3$};

\draw[thin, gray,decorate,decoration={brace,amplitude=10pt,mirror},xshift=0.4pt,yshift=-0.4pt] (-1,-0.5) -- (1,-0.5);
\node[stealth,gray] at (0,-1.2) {$V_{\{1,2\}}$};

%
%


\end{tikzpicture}
\end{equation}
We will show that the user can decode the secret by successively solving the linear systems $V_{\{1,2\}}M_3$, $V_{\{1,2\}}M_2$ and $V_{\{1,2\}}M_1$. The decoder starts by considering  $V_{\{1,2\}}M_3$ which gives,
\begin{equation}
\label{eq:dec}
\begin{bmatrix}
1 & 1 \\
1 & 2 \\
\end{bmatrix}
\begin{bmatrix}
s_3 & s_6 & r_ 3 \\
r_4 & r_5 & r_6 \\
\end{bmatrix}.
\ee
The matrix on the left is invertible, and the user can decode the secret symbols and keys  in \eqref{eq:dec}.   
Then, the decoder considers the system $V_{\{1,2\}}M_2$ after subtracting from it the value of $r_3$ decoded in the previous step. The obtained system is again invertible and the decoder can decode $r_1$ and $r_2$. The decoder then considers $V_{\{1,2\}}M_1$, after canceling out $r_1,\ r_2,\ s_3,\ s_6$ decoded so far, to obtain the following system,
\bes
\begin{bmatrix}
1 & 1 \\
1 & 2 \\
\end{bmatrix}
\begin{bmatrix}
s_1 & s_4\\
s_2 & s_5 \\
\end{bmatrix}.
\ees
The matrix on the left is again invertible and the decoder can reconstruct the secret. This remains true irrespective of which 2 parties are contacted.
\vspace{.2cm}

\noindent{\em 2) Minimum {\CO} and {\RO} for $d_2=3$ and $d_1=4$:}
We check that a user contacting any $d,\ d=3,4,$ parties can decode the secret while achieving the minimum communication and read overheads given in~\eqref{eq:CO}~and~\eqref{eq:RC}. Suppose a user contacts $d_2=3$ parties indexed by $I\subset [n]$. The user  reads and downloads the first $k\alpha/\alpha_2=3$ symbols of each contacted share corresponding to $V_{I} \bbm M_1\ M_2 \ebm$ (in black and red), where $V_I$ is the matrix formed by the rows of $V$ indexed by $I$. The user will be able to reconstruct the secret by implementing a decoding procedure similar to the one above. The resulting {\CO} and {\RO} are equal to $3/2-k=1/2$ units achieving the optimal $\CO(d_2)$ and $\RO(d_2)$ given in~\eqref{eq:CO}~and~\eqref{eq:RC}. 
In the case when  a user contacts $d_1=4$ parties, the user reads and downloads the first $k\alpha/\alpha_1=2$ symbols of each contacted share corresponding to $V_IM_1$ (in black). The user can always decode the secret because $V_I$ here is a $4\times 4$ square Vandermonde matrix, hence invertible. The resulting {\CO} and {\RO} are equal to $1/3$ achieving the optimal $\CO(d_1)$ and $\RO(d_1)$ given in~\eqref{eq:CO}~and~\eqref{eq:RC}.

\vspace{0.2cm}

\noindent{\em 3) Perfect secrecy:}  At a high level, perfect secrecy is achieved here because each symbol in a share is ``padded" with at least one distinct key statistically independent of the secret,  making the shares of any party independent of the secret.

\subsection{Proof of Theorem~\ref{thm:main2}} \label{sec:proof2}

Consider the $(n,k,z)$ Universal Staircase code construction defined in Section~\ref{sec:cons2}. We prove Theorem~\ref{thm:main2} by establishing the following properties.

\vspace{0.2cm}

\noindent{\em 1)} {\em Encoding is well defined:} We prove that the  $(n-j+1)^{th}$ row of $\bbm M_1 \dots M_j \ebm$ has the same number of entries as $\cd_j, j=1,\dots,h-1$. Therefore, we  can always construct the matrix $\cd_j$.  In fact, the number of entries of one row of $\bbm M_1\dots M_{j}\ebm$  is equal to the sum of the number of columns of the $M_i$'s, $i=1,\dots,j$. Notice that $\alpha_{i-1}=\alpha_{i}+1$, then we can write, \bes
\dfrac{k\alpha}{\alpha_i\alpha_{i-1}}=k\alpha\left(\dfrac{1}{\alpha_{i}}-\dfrac{1}{\alpha_{i-1}}\right).
\ees
Hence, the number of columns of $\bbm M_1 \dots M_j\ebm$ is given by,
\begin{align} \label{eq:rows}
\dfrac{k\alpha}{\alpha_1}+k\alpha\left(\dfrac{1}{\alpha_2}-\dfrac{1}{\alpha_1}\right)+\dots+k\alpha\left(\dfrac{1}{\alpha_j}-\dfrac{1}{\alpha_{j-1}}\right)&=\dfrac{k\alpha}{\alpha_j},
\end{align}
 which is equal to the number of entries of $\cd_j$.

\vspace{0.4cm}

\noindent{\em 2)} {\em MDS and minimum $\CO(d)$ and $\RO(d)$ for all $d$, $k+z\leq d\leq n$:}
We prove that a user contacting any $d,\ k+z\leq d \leq n,$ parties can decode the secret while achieving the minimum communication and read overheads given in~\eqref{eq:CO}~and~\eqref{eq:RC}. Notice that the MDS property follows directly from the fact that a user contacting $d_h=k+z$ parties can reconstruct the secret by reading and downloading all the contacted shares. 

A user contacting any  $d_j$, $j=1,\dots, h,$ parties downloads the first $k\alpha/\alpha_j$ symbols of each party. Let $I\subseteq [n],\ |I|=d_j,$ be the set of indices of the contacted parties and let $V_I$ be the matrix formed of the rows of $V$ indexed by $I$. The total downloaded data is given by $V_{I}\bbm M_1 \dots M_j\ebm$ and can be divided into $j$ linear systems given as follows,
\begin{align}
V_IM_1&=V_{I}\bbm \cf & \ck_1 \ebm^t \label{eq:undec1}\\
V_IM_2&=V_{I}\bbm \cd_{1}& \ck_2 & \mathbf{0}\ebm^t \\
~&\vdots  \nonumber\\ 
V_IM_{j-1}&=V_{I}\bbm \cd_{j-2}&\ck_{j-1} & \mathbf{0}\ebm^t \label{eq:undec2}\\
V_IM_j&=V_{I}\bbm \cd_{j-1}& \ck_j & \mathbf{0}\ebm^t. \label{eq:undec3}
\end{align}

We  prove by induction that the user can always reconstruct the secret by iteratively decoding $M_i$, $i=j,\dots,1,$ in each linear system $V_IM_i$. To that end, we verify the induction hypothesis for $i=j$. Given the system in~\eqref{eq:undec3}, we show that the user can always decode $M_j$. The zero block matrix in~\eqref{eq:undec3} is of dimensions $(n-d_j)\times (k\alpha/\alpha_j\alpha_{j-1})$. Therefore, \eqref{eq:undec3} can be rewritten as $V'_{I}\bbm \cd_{j-1}& \ck_j \ebm$, where $V'_{I}$ is the square Vandermonde matrix of dimensions $d_j\times d_j$  formed by the first $d_j$ columns of $V_{I}$. Hence, the user can always decode the entries of $M_j$ by inverting $V'_I$.

Next, suppose that the user can decode all the $M_i$'s, $i=j,\dots,l+1$, we prove that the user can always decode $M_l$. The $l^{th}$ system is given by $V_IM_l$. By construction $M_l$ contains $d_l$ non-zero rows, because the $\mathbf{0}$ block matrix is of dimensions $(n-d_l)\times (k\alpha/\alpha_l\alpha_{l-1})$. In addition, the entries of the last $l-1$ non-zero rows of $M_j$ are present in $\cd_f$ for $f=j-1,\dots,l-1,$ which were previously decoded. It can be checked that $d_j=d_l-(l-1)$ for all $l<j$.
Therefore, after subtracting the last $l-1$ rows of $M_l$, the system becomes $V'_IM'_l$, where $V'_I$ is again the $d_j\times d_j$ square Vandermonde matrix defined above and $M'_l$ is the matrix formed of the first $d_j=d_l-(l-1)$ rows of $M_l$. Henceforth, the user can always decode $M'_l$ by inverting $V'_I$. Finally, the user can decode all the entires of $M_l$ that consist of the entries of $M'_l$ and the entries of the last $l-1$ rows of $M_l$, which were previously decoded.

Next, we show that minimum {\CO} and {\RO} are achieved. The number of symbols read and downloaded by a user contacting $d_j$ parties is equal to $d_j(k\alpha/\alpha_j)$ symbols which corresponds to $d_j k/\alpha_j$ units. Then, the communication and read overheads are given by  $d_j k/\alpha_j-k=kz/\alpha_j=kz/(d_j-z)$, which matches the optimal $\CO(d_j)$ and $\RO(d_j)$ for all $d_j=k+z,\dots,n,$ given in~\eqref{eq:CO}~and~\eqref{eq:RC}.

\vspace{0.2cm}

\noindent{\em 3)} {\em Perfect secrecy:} Similarly to the proof of perfect secrecy in Theorem~\ref{thm:main}, we need to show that $H(\mrR \mid \mrW_Z,\mrS)=0$ for all $Z\subset [n]$, $|Z|= z$ (see Appendix). This is equivalent to showing that given the secret $\mathbf{s}$ as side information, any collection $\mathcal{W}_Z$ of $z$ shares can decode all the random keys. A collection of $\mathcal{W}_Z$ of $z$ shares can be written as $V_Z\bbm M_1 \dots M_h\ebm$, which can be divided into $h=n-k-z+1$ linear systems as follows,
\begin{align}
V_ZM_1&=V_{Z} \begin{bmatrix} \mathcal{S}  & \mathcal{R}_1  \end{bmatrix}^t\label{eq:unadv_obs1}\\
V_ZM_2&=V_{Z} \begin{bmatrix}  \mathcal{D}_1 & \mathcal{R}_2 & \mathbf{0}\\ \end{bmatrix}^t\label{eq:unadv_ob1}\\
~&\vdots \nonumber\\
V_ZM_h&=V_{Z} \begin{bmatrix}  \mathcal{D}_{h-1} & \mathcal{R}_{h} & \mathbf{0}\\ \end{bmatrix}^t.\label{eq:unadv_ob2}
\end{align}

We will prove by induction that given the secret $\mathbf{s}$ as side information, any collection $\ce_Z$ of $z$ shares can always iteratively decode $\ck_i$, $i=1,\dots,h,$ in each linear system $V_ZM_i$. To that end, we verify the induction hypothesis for $i=1$ by showing that a collection of $\ce_Z$ shares can always decode $\ck_1$  in~\eqref{eq:unadv_obs1}. Recall that the dimensions of $\ck_1$ are $z\times k\alpha/\alpha_1$. Given the secret $\mathbf{s}$, \eqref{eq:unadv_obs1} becomes,
\bes
V_{Z} \begin{bmatrix} \mathbf{0}  & \mathcal{R}_1  \end{bmatrix}^t=V''_Z \mathcal{R}_1,
\ees
where $V''_Z$ is a $z\times z$ square Vandermonde matrix formed by the last $z$ columns of $V_Z$. Therefore, $\ck_1$ can be decoded  by inverting $V''_Z$. 

Next, we suppose that any collection of $\ce_Z$ shares can decode all the  $\ck_i$'s  for $i=1,\dots,l-1$, and show that any collection of $\ce_Z$ can decode  $\ck_l$. The $l^{th}$ system is given by $V_IM_l=V_I\bbm \cd_{l-1}& \ck_l & \mathbf{0}\ebm^t$. By construction, $\cd_{l-1}$ consists of the entries of the last row of $M_{l-1}$ which were previously decoded. Given the previously decoded information, any collection of $\ce_Z$ shares can cancel out the entries of $\cd_{l-1}$ to obtain $V^*_Z\ck_{l}$. Since the dimensions of $\ck_l$ are $z\times k\alpha/\alpha_{l}\alpha_{l-1}$, the matrix $V^*_Z$ is a $z\times z$ square Vandermonde matrix formed by the $(\alpha_l+1)^{th}$ to $(\alpha_l+z)^{th}$ rows of $V_Z$. Thus, $\ck_l$ can be always  decoded because $V^*_Z$ is invertible. Therefore, all the keys can always be decoded. Hence, $H(\mrR \mid \mrW_Z,\mrS)=0$. This concludes the proof of Theorem~\ref{thm:main2}.

\vspace{0.2cm}

\noindent {\em $\Delta$-Universal Staircase codes: }Note that the construction of Universal Staircase codes can be modified to construct Staircase codes that achieve minimum {\CO} and {\RO} only for a desired  subset $\Delta$ of all possible $d$'s, i.e., $\Delta \subseteq\{k+z,\dots,n\}$.  We refer to these codes as $(n,k,z,\Delta)$ $\Delta$-universal Staircase codes. The advantage of these codes over  universal codes is that they may require smaller  number of symbols per share $\alpha$.

 \noindent{\em Encoding:} Let  $\Delta'\triangleq \Delta\setminus \{k+z\}$ and  order the $d$'s in $\Delta'$ in decreasing order. We write $\Delta'=\{d_{i_1},\dots,d_{i_{|\Delta'|}}\}\subseteq \{d_1,\dots,d_{h-1}\}$, where $d_{i_1}>d_{i_2}>\dots>d_{i_{|\Delta'|}}$. Let $\alpha_{i_j}=d_{i_j}-z$ for all $d_{i_j}\in \Delta'$ and let $\alpha=LCM(\alpha_1,\dots,\alpha_{\left\lvert \Delta' \right\rvert})$.
Define $d_{i_{|\Delta'|+1}}\triangleq k+z$ and $\alpha_{i_{|\Delta'|+1}}\triangleq k$. The secret symbols are arranged in a matrix $\cf$ of dimensions $\alpha_{d_{i_1}}\times k\alpha/\alpha_{d_{i_1}}$ and the random keys are partitioned into the matrices $\ck_{i_1},\dots,\ck_{i_{|\Delta'|+1}},$ of dimensions $z\times k\alpha/\alpha_{i_1}$ for $\ck_{i_1}$ and $z\times k\alpha(\alpha_{i_j}-\alpha_{i_{j-1}})/(\alpha_{i_j}\alpha_{i_{j-1}})$ for all other $\ck_{i_j}$, $j=2,\dots,|\Delta'|+1$.
Construct $M_{i_1}$ as the $d_{i_1}\times k\alpha/\alpha_{i_1}$ matrix structured as $M_1$ in~\eqref{eq:matcon}. And, for each $d_{i_j}$, $j=2,\dots, |\Delta'|+1$, construct $M_{i_j}$ as the $d_{i_1}\times k\alpha(\alpha_{i_j}-\alpha_{i_{j-1}})/(\alpha_{i_j}\alpha_{i_j-1})$ structured as $M_{i_j}$ in~\eqref{eq:matcon}. The matrix $\cd_{i_j}$, $j=1,\dots,|\Delta'|$, is the matrix of dimensions $\alpha_{i_{j+1}}\times k\alpha(\alpha_{i_{j+1}}-\alpha_{i_{j}})/(\alpha_{i_{j+1}}\alpha_{i_{j}})$ containing the last $d_{i_j}-d_{i_{j+1}}$ rows of $\bbm M_{i_1} \dots M_{i_{j}}\ebm$, from row $d_{i_j}$ to row $d_{i_{j+1}}+1$. Then, concatenate the constructed matrices, $M_{i_1},\dots,M_{i_{|\Delta'|+1}},$ to obtain the matrix $M$ of dimensions  $d_{i_1}\times \alpha$. The matrix $M$ is multiplied by a Vandermonde matrix of dimensions $n\times d_{i_1}$ to obtain the shares.
\vspace{.2cm}

\noindent{\em Decoding:} To reconstruct the secret, a user contacting any $d_{i_j}$ parties,  indexed by $I\subseteq[n]$,   downloads the first $k\alpha/\alpha_{i_j}$ symbols from each contacted party corresponding to $v_i \bbm M_{i_1} \dots M_{i_j} \ebm$ , for all $i\in I$.

\begin{corollary} \label{cor:1}
 Let $\Delta\subseteq \{k+z,\dots,n\}$. The $(n,k,z,\Delta)$ $\Delta$-universal Staircase codes defined above over $GF(q)$, $q>n$, satisfies the required MDS and perfect secrecy constraints given in~\eqref{eq:secrecy}~and~\eqref{eq:mds} and achieves optimal communication overhead $\CO(d)$ and read overhead $\RO(d)$ given in~\eqref{eq:CO}~and~\eqref{eq:RC} simultaneously for all $d$, $d\in \Delta$.
\end{corollary}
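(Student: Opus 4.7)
The plan is to mirror the three-part proof of Theorem~\ref{thm:main2}, replacing the consecutive sequence $d_1>d_2>\dots>d_h$ by the ordered subset $d_{i_1}>d_{i_2}>\dots>d_{i_{|\Delta'|+1}}=k+z$, and tracking the gaps $d_{i_j}-d_{i_{j+1}}$ carefully, since these now determine the heights of the zero sub-blocks of the $M_{i_j}$'s and the heights of the $\cd_{i_j}$'s. All linear-algebra ingredients (invertibility of square Vandermonde sub-matrices, padding of shares by fresh keys) carry over verbatim; the only genuine work is verifying that the dimensions match so that the inductive decoding argument still fits.

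\textbf{Step 1 (encoding is well defined).} I need to check that the $\alpha_{i_{j+1}}\times k\alpha(\alpha_{i_{j+1}}-\alpha_{i_j})/(\alpha_{i_{j+1}}\alpha_{i_j})$ matrix $\cd_{i_j}$ can indeed be filled with the entries of the last $d_{i_j}-d_{i_{j+1}}=\alpha_{i_j}-\alpha_{i_{j+1}}$ rows of $\bbm M_{i_1}\dots M_{i_j}\ebm$. The total number of columns in $\bbm M_{i_1}\dots M_{i_j}\ebm$ is, by a telescoping sum analogous to~\eqref{eq:rows},
\bes
\dfrac{k\alpha}{\alpha_{i_1}}+\sum_{\ell=2}^{j}\dfrac{k\alpha(\alpha_{i_\ell}-\alpha_{i_{\ell-1}})}{\alpha_{i_\ell}\alpha_{i_{\ell-1}}}=\dfrac{k\alpha}{\alpha_{i_j}},
\ees
so the $d_{i_j}-d_{i_{j+1}}$ bottom rows contain exactly $(\alpha_{i_j}-\alpha_{i_{j+1}})k\alpha/\alpha_{i_j}=k\alpha(\alpha_{i_j}-\alpha_{i_{j+1}})/\alpha_{i_j}$ entries, which matches the number of entries in $\cd_{i_j}$ after division by the $\alpha_{i_{j+1}}$ row dimension.

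\textbf{Step 2 (MDS and minimum $\CO(d),\RO(d)$ for $d\in\Delta$).} For a user contacting any $d_{i_j}$ parties indexed by $I$, downloading the first $k\alpha/\alpha_{i_j}$ symbols from each yields the $j$ linear systems $V_I M_{i_\ell}$, $\ell=1,\dots,j$, exactly analogous to~\eqref{eq:undec1}--\eqref{eq:undec3}. I will decode them in reverse order $\ell=j,j-1,\dots,1$: in the base case $\ell=j$, the zero block of $M_{i_j}$ has height $d_{i_1}-d_{i_j}$, so $V_IM_{i_j}$ reduces to a $d_{i_j}\times d_{i_j}$ square Vandermonde system in the non-zero rows of $M_{i_j}$, hence invertible. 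For the inductive step, after decoding $M_{i_{\ell+1}},\dots,M_{i_j}$, the bottom $d_{i_\ell}-d_{i_j}$ rows of $M_{i_\ell}$ are already known because they are precisely the entries of some $\cd_{i_f}$ with $f>\ell$ (the key dimensional identity here is $d_{i_1}-(\text{zero-block height of }M_{i_\ell})=d_{i_\ell}$, so the remaining unknown rows number $d_{i_j}$, matching the Vandermonde rank). Subtracting these known entries reduces $V_I M_{i_\ell}$ to a $d_{i_j}\times d_{i_j}$ square Vandermonde system that can be inverted. The $\CO,\RO$ computation is unchanged: $d_{i_j}(k\alpha/\alpha_{i_j})$ symbols equal $d_{i_j}k/\alpha_{i_j}$ units, giving overhead $kz/(d_{i_j}-z)$ which matches~\eqref{eq:CO} and~\eqref{eq:RC}. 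The MDS property is the special case $d_{i_{|\Delta'|+1}}=k+z$.

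\textbf{Step 3 (perfect secrecy).} As in the proof of Theorem~\ref{thm:main2}, it suffices to show $H(\mrR\mid \mrW_Z,\mrS)=0$ for every $Z\subset[n]$ with $|Z|=z$. Writing $V_Z\bbm M_{i_1}\dots M_{i_{|\Delta'|+1}}\ebm$ as $|\Delta'|+1$ linear systems, I will inductively decode $\ck_{i_1},\ck_{i_2},\dots$ by the same mechanism: in the base case, the secret $\mathcal{S}$ is subtracted from $V_Z M_{i_1}$ leaving $V''_Z\ck_{i_1}$ with $V''_Z$ a square Vandermonde formed by the last $z$ columns of $V_Z$, hence invertible. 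In the inductive step, all entries of $\cd_{i_{j-1}}$ were decoded in earlier rounds and can be cancelled, reducing $V_Z M_{i_j}$ to $V^*_Z\ck_{i_j}$ where $V^*_Z$ consists of the $z$ consecutive columns of $V_Z$ sitting between the $\cd_{i_{j-1}}$ block and the zero block (of width matching the height of $\ck_{i_j}$). Invertibility of $V^*_Z$ follows from the Vandermonde-with-consecutive-columns property used throughout the paper, which then finishes the secrecy argument via the Appendix reduction.

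\textbf{Main obstacle.} The only delicate point is the bookkeeping in Step~2: verifying that when $\Delta$ has gaps, the height of the zero block of $M_{i_\ell}$ is correctly $d_{i_1}-d_{i_\ell}$ and that the ``known'' rows to be cancelled in $V_IM_{i_\ell}$ number exactly $d_{i_\ell}-d_{i_j}$, leaving a $d_{i_j}\times d_{i_j}$ invertible Vandermonde subsystem. Once these indices are pinned down (replacing $l-1$ with $d_{i_\ell}-d_{i_j}$ everywhere in the proof of Theorem~\ref{thm:main2}), the rest of the proof is identical.
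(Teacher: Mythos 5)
Your proposal is correct and takes exactly the approach the paper intends: the paper explicitly omits this proof, stating only that it ``follows the same steps of the proof of Theorem~\ref{thm:main2},'' and you have carried out that adaptation faithfully, with the index bookkeeping (gaps $d_{i_\ell}-d_{i_j}$ replacing $l-1$, zero-block heights $d_{i_1}-d_{i_\ell}$, telescoping column count $k\alpha/\alpha_{i_j}$) all verified correctly.
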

\noindent We omit the proof of Corollary~\ref{cor:1} since it follows the same steps of the proof of Theorem~\ref{thm:main2}. 
{
\section{Threshold changeable secret sharing} \label{sec:tcss}
An $(n,k,z;t')$ threshold changeable secret sharing (TCSS) code,  defined in \cite{MPSH99}, is an $(n,k,z)$ secret sharing scheme (satisfying \eqref{eq:secrecy}~and \eqref{eq:mds}), where the threshold $t=k+z$ can be changed to $t'>t$ in a decentralized way without the dealer. The parties are allowed to communicate as long as the security constraint is not violated.
 The efficiency of a TCSS is measured by the new share size for the new threshold  $t'$, which we refer to as the storage cost ({\SC}) of the scheme\footnote{Any secret sharing scheme is trivially threshold changeable, because a user contacting $t'>t$ parties can decode the secret by downloading any $t$ shares. However, it does not achieve minimum storage cost for the new threshold.}. Different variants of threshold changeable secret sharing schemes have been studied in the literature, see e.g., \cite{TCSS1,TCSS2,TCSS3}. A connection between TCSS and CESS is shown in \cite{WW08}. Code constructions are provided in \cite{MPSH99,WW08,ZYSMH12} for the case when $z=t-1$ and the threshold $t'$ is given a priori.

In this section, we show how to construct an $(n,k,z;t')$ TCSS code for a given $t'>t$ using an $(n,k,z,d=t')$ Staircase code. However, different values of $t'$ for the same $(n,k,z)$ may require different Staircase codes. We show that this can be avoided by constructing what we call an $(n,k,z;[t+1:n])$ Universal TCSS code using an $(n,k,z)$ Universal Staircase code. Both constructions involve the parties deleting parts of their shares and do not require communication among the parties. Moreover, this construction achieves  the optimal storage cost ({\SC}) 
\begin{equation}\label{eq:tc1}
\SC = \dfrac{k}{t'-z},
\end{equation}
which is the minimum share size required if the dealer were present. The next example shows how to construct an $(n,k,z;[t+1:n])$ Universal TCSS code with optimal {\SC} from an $(n,k,z)$ Universal Staircase code by deleting  parts of each share.

\begin{example}\label{ex:tcss}
Consider the problem of constructing an $(n,k,z;[t+1:n])=(4,1,1;[3:4])$ Universal TCSS code for all possible $t'$, i.e., $t'=3$ and $4$. To this end, we use an $(n,k,z)=(4,1,1)$ Universal Staircase code constructed in Section~\ref{sec:uniex}. The shares given to each party are depicted in Table~\ref{tab:tcss}.

{\color{black}
\begin{table}[h!]
\normalsize
\linespread{1}
\begin{tikzpicture}
\definecolor{gainsboro}{rgb}{0.7, 0.7, 0.7}
\def \nco{\color{blue}}
\def \white{\color{white}}


\node at (0,0) {
\begin{tabular}[h!]{c|c|c|c}
Party 1 & Party 2 & Party 3 & Party 4\\ \hline
$ s_1+s_2+s_3+r_1$ & $s_1+2s_2+4s_3+3r_1$ &$s_1+3s_2+4s_3+2r_1$&$s_1+4s_2+s_3+4r_1$\\
$s_4+s_5+s_6+r_2$ & $s_4+2s_5+4s_6+3r_2$ & $s_4+3s_5+4s_6+2r_2$ & $s_4+4s_5+s_6+4r_2$ \\
\red $r_1+r_2+r_3$ &\red  $r_1+2r_2+4r_3$ & \red $r_1+3r_2+4r_3$ & \red $r_1+4r_2+r_3$ \\
\nco $s_3+r_4$ & \nco $s_3+2r_4$ & \nco $s_3+3r_4$ & \nco $s_3+4r_4$ \\
\nco $s_6+r_5$ & \nco $s_6+2r_5$ & \nco $s_6+3r_5$ & \nco $s_6+4r_5$ \\
\nco $r_3+r_6$ & \nco $r_3+2r_6$ & \nco $r_3+3r_6$ & \nco $r_3+4r_6$ \\
\end{tabular}
};

\draw[thin, gray,decorate,decoration={brace,amplitude=5pt,mirror}] (-7.2,1.1) -- (-7.2,-0.2);

\draw[dashed,fill=white,opacity=0.6] (-7.2,-1.8) rectangle (7,-0.28); 
\node[font=\footnotesize,opacity=0.4] at (-7.8,0.45) {\rotatebox{90}{\begin{varwidth}{1.5cm}\em New share for $t'=3$\end{varwidth}} };
\node[font=\footnotesize,opacity=0.4] at (-7.6,-1) {\rotatebox{90}{\begin{varwidth}{1.5cm}\em Deleted\end{varwidth}} };
\end{tikzpicture}
\caption{A $(4,1,1;[3:4])$ Universal TCSS code  obtained from an $(4,1,1)$ Universal Staircase code over $GF(5)$. The original code has threshold $t=k+z=2$ and can be changed to either $t'=3$ or $4$.  The threshold change from $t=2$ to $t'=3$ is depicted. Each party deletes the last $3$ symbols of its share. Similarly, the threshold can be changed to $t'=4$ by keeping the first two symbols of each share. In both cases, the obtained code  achieves minimum storage cost ({\SC}) (share size) given by \eqref{eq:tc1}.}
\label{tab:tcss}
\end{table}
}
In our construction, to change the threshold from $t=k+z$ to any $t'$, $t' \in \{t+1,\dots,n\}$, each party deletes the last $\dfrac{t'-z-k}{t'-z}\alpha$ symbols of its share. Recall that in CESS, each share is of unit size and consists of $\alpha$ symbols ($\alpha$ symbols = $1$ unit). In this example, to change the threshold from $t=2$ to $t'=3$, each party deletes the last $3$ symbols  (in shaded blue) of its share. The obtained code achieves the minimum Storage Cost ({\SC}) given in~\eqref{eq:tc1}, because each new share is of size $3$ symbols equal to $1/2$ unit. One can verify that a user contacting any $t'=3$ parties and downloading their new shares can decode the secret.

Similarly, the same code can be used to change the threshold from $t=2$ to  $t'=4$. Each party deletes the last $4$ symbols (in red and shaded blue) of its original share (or deletes the last symbol, in red, if the threshold was already changed to 3). Each new share consists of $2$ symbols equal to $1/3$ unit. Hence, the obtained code achieves minimum Storage Cost ({\SC}) given in~\eqref{eq:tc1}. One can verify that a user downloading all the shares can decode the secret. In both cases, secrecy is inherited from the Staircase code, because the parties do not exchange any information when changing the threshold.
\end{example}


\begin{corollary} \label{thm:tcmain}
An $(n,k,z;t')$ TCSS code, respectively an $(n,k,z;[t+1:n])$ Universal TCSS code, can be constructed using an $(n,k,z,d)$ Staircase code defined in Section~\ref{sec:cons1}, respectively an $(n,k,z)$ Universal Staircase code defined in Section~\ref{sec:cons2}. To change the threshold from $t=k+z$ to $t'$, each party deletes the last $\dfrac{t'-k-z}{t'-z}\alpha$ symbols of its share. Both constructions achieve optimal storage cost ({\SC}) given in \eqref{eq:tc1}.
 \end{corollary}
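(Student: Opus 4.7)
The plan is to reduce both claims to the decoding and secrecy properties already established in Theorems~\ref{thm:main} and~\ref{thm:main2}, since truncation of shares is a purely local operation that cannot introduce new information. First I would verify the arithmetic matches. For the non-universal case, set $d = t'$ in the $(n,k,z,d)$ Staircase code so that $\alpha = d - z = t' - z$; keeping only the first $k$ of the $\alpha$ symbols of each share amounts to deleting the last $\alpha - k = t' - k - z = \tfrac{t'-k-z}{t'-z}\alpha$ symbols. For the universal case, fixing any $t' \in \{t+1,\dots,n\}$ corresponds to $d_j = t'$ and $\alpha_j = t'-z$; keeping the first $k\alpha/\alpha_j$ symbols of each share leaves each party with $\alpha - k\alpha/(t'-z) = \tfrac{t'-k-z}{t'-z}\alpha$ deleted symbols, matching the statement.

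Next I would establish reconstruction. For the fixed-$d$ construction with $d=t'$, Theorem~\ref{thm:main} guarantees that any set of $t'$ parties can decode $\mathbf{s}$ using only the first $k$ symbols of each of their shares, which is exactly what remains after truncation. For the Universal Staircase code, Theorem~\ref{thm:main2} guarantees that any $d_j = t'$ parties can decode $\mathbf{s}$ from the first $k\alpha/\alpha_j$ symbols of each contacted share, which again is precisely what survives the truncation. So in both cases a user contacting any $t'$ parties can recover $\mathbf{s}$, establishing the new MDS property with threshold $t'$.

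For perfect secrecy, I would observe that each new (truncated) share is a deterministic function of the original share, and no communication among parties occurred. If $\widetilde{\mrW}_i$ denotes the truncated share, then for any $Z \subset [n]$ with $|Z|=z$ the random variable $\widetilde{\mrW}_Z$ is a function of $\mrW_Z$, so by the data processing inequality
\begin{equation*}
H(\mrS) \;\geq\; H(\mrS \mid \widetilde{\mrW}_Z) \;\geq\; H(\mrS \mid \mrW_Z) \;=\; H(\mrS),
\end{equation*}
where the last equality is the perfect secrecy of the original Staircase code (Theorems~\ref{thm:main},~\ref{thm:main2}). Hence $H(\mrS \mid \widetilde{\mrW}_Z) = H(\mrS)$, matching~\eqref{eq:secrecy} for the new scheme.

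Finally I would check optimality of {\SC}. Each new share consists of $k\alpha/(t'-z)$ symbols, i.e., $k/(t'-z)$ units, matching~\eqref{eq:tc1}. The matching lower bound follows from the standard cut-set / entropy argument for an $(n,k,z)$ threshold secret sharing with threshold $t'$: any $t'$ shares jointly determine $\mathbf{s}$ while any $z$ are independent of $\mathbf{s}$, so the combined entropy of any $t'-z$ shares must be at least $H(\mrS) = k$ units, forcing at least one share of size $\geq k/(t'-z)$ units; by symmetry of the construction the bound holds for every share. The main subtle point, and the one I would be most careful about, is ensuring that truncation genuinely preserves the secrecy bound for every $z$-subset (rather than only on average), but this is immediate from the data processing inequality above, so no essential new work is required beyond invoking Theorems~\ref{thm:main} and~\ref{thm:main2}.
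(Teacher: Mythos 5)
Your proof is correct and follows essentially the same three-part structure as the paper's: verify the arithmetic of the truncation, observe that the surviving symbols are precisely what a $d=t'$ user would download so reconstruction follows from Theorems~\ref{thm:main}/\ref{thm:main2}, and argue that secrecy is inherited because truncation is a local deterministic operation. Where you go slightly further than the paper: you make the secrecy inheritance explicit via the data-processing chain $H(\mrS)\geq H(\mrS\mid\widetilde{\mrW}_Z)\geq H(\mrS\mid\mrW_Z)=H(\mrS)$ (the paper just states that secrecy ``follows from the properties of the original'' scheme since no communication occurred), and you also sketch the converse lower bound for {\SC}, which the paper does not re-derive in this corollary, treating it as a known ramp-scheme bound. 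These additions are welcome rigor rather than a different route; the one place to tighten is your converse sentence ``forcing at least one share of size $\geq k/(t'-z)$\dots by symmetry,'' which should instead observe that the cut-set argument bounds $\sum_{i\in A}H(\mrW_i)\geq H(\mrS)$ for every $(t'-z)$-subset $A$ disjoint from a $z$-set, and that equal share sizes (true of the construction and the usual normalization in this model) then give the per-share bound.
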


\begin{proof}
We prove that an $(n,k,z;[t+1:n])$ Universal TCSS code can be constructed using an $(n,k,z)$ Universal Staircase code and omit the proof for $(n,k,z;t')$ TCSS code, since it follows the same steps.

\vspace{0.2cm}

Starting with an $(n,k,z)$ Universal Staircase code, the threshold is $t=k+z$. Assume that the threshold is to be changed to $t'$ for any $t' \in \{t+1,\dots,n\}$. Each party deletes the last $\dfrac{t'-z-k}{t'-z}\alpha$ symbols of its share (original share size is $\alpha$ symbols). 

We establish the following properties.
\begin{enumerate}
\item {\em Minimum Storage Cost ({\SC}):} By construction, the new share size is $\alpha-(t'-z-k)\alpha/(t'-z)=k\alpha/(t'-z)$ symbols. Recall that each $\alpha$ symbols are equal to 1 unit, hence each share is of size $k/(t'-z)$ units and~\eqref{eq:tc1} is achieved.

\item {\em MDS in $t'$:} By construction, after changing the threshold to $t'$, each party keeps exactly the symbols that are sent to a user contacting any $t'$ parties in the original CESS code. Therefore, the user can decode the secret.

\item {\em Perfect secrecy:} Since the parties do not exchange any information when changing the threshold, perfect secrecy follows from the properties of the original  Universal Staircase code.
\end{enumerate}
\end{proof}

\begin{remark}
Note that the Universal TCSS code obtained from our construction also minimizes the communication and read overheads ({\CO} and {\RO}) in addition to minimizing the storage cost ({\SC}). In other words, the new $n$ shares stored after the threshold update, allow a user contacting any $d$ parties, $d\in \{t',\dots,n\},$ to decode the secret while achieving the minimum communication and read overheads given in~\eqref{eq:CO}~and~\eqref{eq:RC}.

For instance, in Example~\ref{ex:tcss} for the new threshold $t'=3$, a user contacting any $d=4$ parties and downloading the first two symbols (in black) of each new share can decode the secret. The incurred {\CO} (and {\RO}) is equal to $2$ symbols equal to $1/3$ unit and is minimum, i.e., achieves~\eqref{eq:CO}~and~\eqref{eq:RC}.
\end{remark}
}

\section{Conclusion}\label{sec:conc}

We considered the communication efficient secret sharing (CESS) problem. The goal is to minimize the read and download overheads for a user interested in decoding the secret. To that end, 
we introduced a new class of deterministic linear CESS codes,  called  {\em Staircase Codes}. We described two explicit constructions of Staircase codes. The first construction achieves minimum overhead for any given number of parties $d$ contacted by the user. The second is a universal construction that achieves minimum overheads simultaneously for all possible values of $d$.
The introduced codes require a small finite field $GF(q)$ of size $q> n$, which is the same requirement for Reed Solomon based SS codes \cite{McESa81}.  Finally, we described how Staircase codes can be used to construct threshold changeable secret sharing (TCSS) codes.

In conclusion, we  point out some problems that remain open. The model we considered here and the proposed Staircase codes   can provide security against parties corrupted by a passive Eavesdropper. However, the problem of constructing communication and read efficient codes that provide security against an active (malicious) adversary remains open.  Moreover,  constructing  threshold changeable secret sharing codes where the security level can be increased by increasing the number of possibly colluding parties   also remains open in general (only special cases   were solved in  \cite{TCSS2}).
%

\appendix
Let $W_i$ denote the random variable representing share $w_i$, and for any subset $B\subseteq \{1,\dots,n\}$ denote by $\mrW_B$ the set of shares indexed by $B$, i.e., $\ce_B=\{W_i; i\in B\}$. We prove that, for all $Z\subset\{1,\dots,n\},\ |Z|= z$, the perfect secrecy constraint $H(\mrS \mid \mrW_Z)=H(\mrS)$, given in~\eqref{eq:secrecy},  is equivalent to $H(\mrR \mid \mrW_Z,\mrS)=0$. The proof is standard \cite{RSKGlobecom11,RSS12} but we reproduce it here for completeness.  In what follows, the logarithms in the entropy function are taken   base $q$. We can write,
\begin{align}
H(\mrS \mid \mrW_Z)&=H(\mrS)-H(\mrW_Z)+H(\mrW_Z \mid \mrS)\\
~&=H(\mrS)-H(\mrW_Z)+H(\mrW_Z \mid \mrS)-H(\mrW_Z \mid \mrS, \mrR) \label{eq:sk}\\
~&=H(\mrS)-H(\mrW_Z)+I(\mrW_Z; \mrR \mid \mrS)\\
~&=H(\mrS)-H(\mrW_Z)+H(\mrR \mid \mrS) - H(\mrR \mid \mrW_Z,\mrS)\\
~&=H(\mrS)-H(\mrW_Z)+H(\mrR \mid \mrS) \label{eq:zs}\\
~&=H(\mrS)-H(\mrW_Z)+H(\mrR)\\
~&=H(\mrS)-z\alpha+z\alpha \label{eq:sc}\\
~&=H(\mrS).
\end{align}
\vspace{0.2cm}

\noindent Equation~\eqref{eq:sk} follows from the fact that given the secret $\mathbf{s}$ and the keys $\ck$ any share can be decoded, equation~\eqref{eq:zs} follows because $H(\mrR \mid \mrS, \mrW_Z)=0$ and equation~\eqref{eq:sc}  follows because the Staircase code constructions use $z\alpha$ independent random keys.

\bibliographystyle{ieeetr}
\bibliography{IEEEabrv,DSS}

%

\end{document}